\documentclass{article} % For LaTeX2e
\usepackage{iclr2026_conference,times}

\usepackage{amsmath,amsfonts,bm}

\def\eqref#1{equation~\ref{#1}}
\def\1{\bm{1}}

\DeclareMathAlphabet{\mathsfit}{\encodingdefault}{\sfdefault}{m}{sl}
\SetMathAlphabet{\mathsfit}{bold}{\encodingdefault}{\sfdefault}{bx}{n}

\newcommand{\KL}{D_{\mathrm{KL}}}

\usepackage{hyperref}
\usepackage{url}
\usepackage{amsmath,amssymb,amsthm}
\usepackage{graphicx}
\usepackage{booktabs}
\usepackage{tikz}
\usetikzlibrary{arrows.meta,positioning}
\usepackage{quantikz}

\newtheorem{theorem}{Theorem}
\newtheorem{lemma}{Lemma}
\newtheorem{proposition}{Proposition}
\newtheorem{assumption}{Assumption}
\newtheorem{definition}{Definition}
\newtheorem{remark}{Remark}

\newcommand{\Deff}{d_{\mathrm{eff}}}
\providecommand{\KL}{\mathrm{KL}}
\renewcommand{\KL}{\mathrm{KL}}
\newcommand{\Fmat}{\hat{\mathcal{F}}}

\title{Entanglement as a Structural Complexity Axis:\\ A PAC-Bayesian View of Generalization in\\ Quantum Policies and Value Functions}

\author{
Jian Xu$^{1,2}$, \quad Delu Zeng$^{3}$, \quad John Paisley$^{4}$, \quad Qibin Zhao$^{2}$ \\[3pt]
\normalfont\normalsize
$^{1}$RIKEN iTHEMS \quad $^{2}$RIKEN AIP \quad $^{4}$Columbia University \\
\normalfont\normalsize
$^{3}$South China University of Technology \quad \texttt{jian.xu@riken.jp}
}

\iclrfinalcopy
\begin{document}

\maketitle
\lhead{}\rhead{}\chead{}\renewcommand{\headrulewidth}{0pt}  % strip the ICLR conference header/rule for the preprint

\begin{abstract}
Parameterized quantum circuits (PQCs) are increasingly used as policies and value
functions in quantum reinforcement learning (QRL), yet almost all evaluations report
only average return, leaving open the question of \emph{when} and \emph{why} a quantum
policy generalizes. We give a PAC-Bayesian answer. We derive a generalization bound for
stochastic quantum policies whose complexity term is controlled not by the raw number of
circuit parameters, but by the \emph{effective dimension of the Fisher geometry induced by
the circuit}---a quantity that entanglement inflates. Empirically, in a controlled
setting that fixes the number of trainable rotations and varies \emph{only} the
entangling connectivity, we find that entanglement is an independent axis of complexity:
at a fixed parameter count the train-to-test generalization gap grows with the circuit's
Fisher effective dimension---which entanglement inflates---whereas raw parameter count is the
weakest predictor of the gap. The bound is primarily a \emph{ranking certificate}: it
correctly orders circuits of identical parameter count, something a parameter-counting bound
cannot do at all. We confirm this mechanism across settings---supervised classification (at $4$--$16$ qubits, on synthetic and on real Iris/Breast-Cancer/Wine data), a reward-only quantum
contextual bandit, and multi-step value-function generalization---where entangled circuits
generalize worse than non-entangled ones of identical parameter count and the gap shrinks
with sample size as the bound predicts. Our strongest evidence is in these low-variance
decision models (single-observable classifiers and value heads, and a one-step policy); in
genuine end-to-end multi-step \emph{policy} learning the standard entangler's effect is
statistically significant but return variance leaves the full ordering only partially
resolved, so we frame the paper as a controlled study of the entanglement--generalization
trade-off in quantum decision models rather than a solved account of end-to-end quantum RL.
A partial-correlation analysis shows the Fisher
effective dimension screens off the entangling pattern (which adds $\Delta R^2\!<\!0.01$ once
$\Deff$ is controlled), controls (matched training accuracy, alternative readout and
optimizer) rule out optimization confounders, and the effect survives execution on an IBM
Heron quantum processor under real noise. Our results reframe the design of quantum policies around an
entanglement--generalization trade-off rather than around expressivity alone.
\end{abstract}

\section{Introduction}

Quantum reinforcement learning (QRL) replaces the neural policy or value network of a
classical agent with a parameterized quantum circuit (PQC). Recent work has made such
hybrid agents practical---PPO-Q integrates a PQC policy into proximal policy optimization
and validates it on superconducting hardware \citep{jin2025ppo}, and adaptive non-local
observables show that the measurement layer itself is a key design axis
\citep{lin2025quantum}. Yet the field is almost entirely empirical: papers report the average
return of a chosen ansatz, and a reviewer is left to wonder whether a reported gain
reflects a genuine inductive bias or merely the implicit regularization of a small model.

We argue that the right question is not ``does a quantum policy achieve higher return''
but ``\emph{when does a quantum policy generalize, and what property of the circuit
controls it}''. This is the question PAC-Bayes theory is built to answer for stochastic
predictors \citep{mcallester1999pac,dziugaite2017computing}, and it has recently been brought to
reinforcement learning, yielding non-vacuous certificates for modern off-policy agents
\citep{zitouni2025pac}. In parallel, a PAC-Bayesian analysis of \emph{supervised} quantum
models has appeared \citep{rodriguez2026pac}, whose complexity term is built from learned
parameter norms and sparsity. To our knowledge no PAC-Bayes analysis addresses quantum
\emph{policies}, and none isolates \emph{entanglement} as the governing complexity.

This paper makes three contributions.
\begin{itemize}
\item \textbf{A PAC-Bayes bound for quantum policies} (Section~\ref{sec:theory}) whose
complexity term is the effective dimension of the circuit-induced Fisher geometry rather
than the parameter count. We show that under a data-dependent Gaussian posterior the KL
complexity is governed by $\Deff(\theta)$, which for PQCs is inflated by entanglement.
\item \textbf{An identification of entanglement as an independent complexity axis}
(Section~\ref{sec:decouple}). Fixing the number of trainable rotations and varying only
the entangling connectivity, we show the generalization gap grows monotonically with the
Meyer--Wallach entangling power \emph{at fixed parameter count}, while parameter count
itself is the weakest predictor of the gap. The axis is \emph{readout-gated}: the mechanism
is that entanglement enlarges the readout's light-cone, so it is decisive under the local
single-qubit readouts standard in QRL and is partly substituted by a global readout, which
activates the same parameters directly (Section~\ref{sec:controls})---a control that directly
confirms the light-cone account and shows $\Deff$, not the entangling label, is what tracks
the gap.
\item \textbf{Broad empirical validation}: supervised classification at $4$/$6$/$8$ qubits
(Sections~\ref{sec:decouple}--\ref{sec:supervised}), a reward-only quantum contextual bandit
and multi-step value-function generalization (Section~\ref{sec:rl}), optimization controls,
and a run on IBM Heron hardware (Section~\ref{sec:controls})---in all of which entangled
circuits generalize worse than non-entangled ones of identical parameter count.
\end{itemize}

The practical message is a design principle: entanglement is a double-edged sword. It
enlarges the function space a quantum policy can represent, but it carries a
generalization cost that a parameter-counting analysis does not see and that our bound
makes explicit.

\paragraph{Scope.} We are deliberate about what we claim. The theory is stated for policies
and value functions, and our \emph{strongest} evidence comes from the low-variance
instantiations of exactly these objects: single-observable classifiers and value heads, a
one-step (contextual-bandit) policy, and Monte-Carlo evaluation of a multi-step value
function. In genuine end-to-end multi-step \emph{policy} learning (REINFORCE) the standard
entangler's effect \emph{is} statistically significant (Section~\ref{sec:rl}), but return-based
estimates are high enough variance that the full ordering is not always resolved; we therefore
treat the low-variance settings as the primary probes. The paper is thus best read as a study
of the entanglement--generalization trade-off in quantum decision models---policies and value
functions---rather than a claim to have solved generalization in end-to-end quantum RL.

\section{Related Work}

\paragraph{Quantum reinforcement learning.} PQC agents have been trained with policy
gradients, actor--critic and value-based methods
\citep{jin2025ppo,luo2025training,wu2025quantum}, with recent attention to the
measurement layer \citep{lin2025quantum} and to data re-uploading and trainability
\citep{coelho2024vqc}. These works optimize and report return; generalization is
not analyzed.

\paragraph{Generalization of quantum models.} A now-substantial literature bounds the
generalization of quantum models. Covering-number and Rademacher analyses give uniform
bounds for quantum feature maps and PQCs \citep{banchi2021generalization}, and
\citet{caro2022generalization} prove the landmark result that a PQC with $T$ parameterized
gates generalizes from $\widetilde O(T)$ training points---a bound driven by \emph{gate
count}, not entanglement. \citet{gilfuster2024rethinking} caution that such
uniform-convergence bounds can be simultaneously satisfied and uninformative for the models
one actually trains, motivating \emph{model-dependent} capacity measures. The
Fisher-information ``effective dimension'' is one such measure, proposed as a capacity that
reflects the trained model rather than the worst case \citep{abbas2021power}. Our $\Deff$ is
this same log-determinant functional, but our contribution is orthogonal to
\citet{abbas2021power} in two respects (Appendix~\ref{app:extra} makes the comparison
precise): (i) they use $\Deff$ as a global, sample-size-dependent capacity for a model
\emph{class}, whereas we evaluate it at the \emph{trained} parameters with a fixed $\gamma$
and use it only to \emph{rank} circuits of equal parameter count; and (ii) they do not
connect $\Deff$ to entanglement---our Proposition~\ref{prop:ent} \emph{attributes} the
inflation of the Fisher rank (and hence of $\Deff$) to the entanglement-driven growth of the
readout light-cone, which is the new mechanism. Most closely related on the PAC-Bayes side,
\citet{rodriguez2026pac} give a PAC-Bayes bound for \emph{supervised} quantum classifiers
whose complexity is a parameter-norm/sparsity term; they report weak but positive
correlations between their complexity and the observed gap. We differ in two ways: we treat
\emph{policies} (return-based generalization under Markov dependence) and we identify
\emph{entanglement}, not parameter norm or gate count, as the governing axis, isolating it
at fixed parameter and gate count.

\paragraph{PAC-Bayes for RL and control.} PAC-Bayes control learns policies with
certified generalization to novel environments \citep{majumdar2021pac}, and recent
work derives PAC-Bayes RL bounds that account for the mixing time of the induced Markov
chain \citep{zitouni2025pac}. We adopt this template and instantiate its complexity term for
quantum policies.

\section{A PAC-Bayes Bound for Quantum Policies}
\label{sec:theory}

\paragraph{Setup.} An agent interacts with an MDP $M$ drawn from a distribution
$\mathcal{D}$ over environments. A stochastic quantum policy $\pi_\theta(a\mid s)$ is
defined by a PQC $U(s,\theta)$ acting on $n$ qubits: the state $s$ is encoded, the circuit
is applied, and the action is read from a \emph{single} readout observable
$\langle Z_0\rangle_{s,\theta}$ (a binary policy $\pi_\theta(a{=}1\mid s)=\sigma(\alpha\langle Z_0\rangle_{s,\theta})$;
$k$-action policies use $k$ observables and our analysis applies per observable). We use the
single-observable readout throughout, both in the theory below and in every experiment. We
write $J(\pi_\theta)=\mathbb{E}_{M\sim\mathcal{D}}
\mathbb{E}_{\tau\sim\pi_\theta}[R(\tau)]$ for the expected return with $R\in[0,R_{\max}]$,
and $\hat J(\pi_\theta)$ for its empirical estimate on $m$ training environments. We
consider a Gaussian posterior $Q=\mathcal{N}(\theta,\sigma^2 I_d)$ over the $d$ circuit
parameters and a prior $P=\mathcal{N}(\theta_0,\sigma^2 I_d)$.

\begin{theorem}[PAC-Bayes bound for quantum policies]
\label{thm:main}
Under the mixing-time assumptions of \citet{zitouni2025pac}, for any $\delta\in(0,1)$, with
probability at least $1-\delta$ over the draw of $m$ training environments, every
posterior $Q$ satisfies
\begin{equation}
J(\pi_Q) \;\ge\; \hat J(\pi_Q) \;-\; R_{\max}\,\sqrt{\frac{\KL(Q\Vert P) + \ln\frac{2\sqrt{m}}{\delta}}{2\,m_{\mathrm{eff}}}},
\qquad
\KL(Q\Vert P)=\frac{\lVert \theta-\theta_0\rVert^2}{2\sigma^2},
\label{eq:bound}
\end{equation}
where $m_{\mathrm{eff}}=m/\kappa$ discounts the sample size by the chain mixing factor
$\kappa$.
\end{theorem}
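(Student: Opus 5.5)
The plan is to reduce Theorem~\ref{thm:main} to a standard McAllester/Maurer-type PAC-Bayes bound for a bounded loss, applied to the normalized return $\ell(\theta;M)=R(\tau)/R_{\max}\in[0,1]$, and to import the dependence correction from \citet{zitouni2025pac} as a black box. The proof has three ingredients, handled in order: the Gaussian KL identity, the change-of-measure inequality with a Hoeffding-type moment bound, and the mixing-time discount.

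\textbf{Step 1 (KL term).} Since $Q$ and $P$ share the covariance $\sigma^2 I_d$, the standard closed form for the KL divergence between Gaussians loses its trace and log-determinant terms. Only the mean term survives, giving $\KL(Q\Vert P)=\lVert\theta-\theta_0\rVert^2/(2\sigma^2)$. This is exactly the right-hand identity in \eqref{eq:bound}. Nothing quantum enters here; the circuit only defines the map $\theta\mapsto\pi_\theta$, and the argument needs that map only to be measurable with returns bounded in $[0,R_{\max}]$. Boundedness holds automatically because $\pi_\theta(a\mid s)=\sigma(\alpha\langle Z_0\rangle)\in(0,1)$ and $R\in[0,R_{\max}]$.

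\textbf{Step 2 (change of measure).} Apply the Donsker--Varadhan inequality to $f(\theta)=2m_{\mathrm{eff}}\,(J(\pi_\theta)-\hat J(\pi_\theta))^2/R_{\max}^2$. The prior $P$ must be fixed before the data; $\theta_0$ is the initialization, chosen independently of the training environments. This gives, simultaneously for all $Q$,
\[
2m_{\mathrm{eff}}\,\mathbb{E}_Q\!\left[(J-\hat J)^2\right]/R_{\max}^2\le \KL(Q\Vert P)+\ln\mathbb{E}_P\!\left[e^{f}\right].
\]
Two further facts then finish the step. Jensen's inequality lets me pass from the expectation of the square to the square of $J(\pi_Q)-\hat J(\pi_Q)$, where $\pi_Q$ is the $Q$-averaged policy and the return is linear in $Q$. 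Markov's inequality over the sample controls $\ln\mathbb{E}_S\mathbb{E}_P[e^f]$ by $\ln(2\sqrt{m}/\delta)$ with probability $1-\delta$. Taking the square root and rearranging gives the one-sided form \eqref{eq:bound}.

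\textbf{Step 3 (Markov dependence), the main obstacle.} With i.i.d.\ environments and independent rollouts, the bound $\mathbb{E}_S[e^{f}]\le 2\sqrt{m}$ is Maurer's lemma for $[0,1]$ losses. Here, however, the empirical return is an average along trajectories of a Markov chain, so samples are not independent. I would invoke the concentration result under the mixing assumptions of \citet{zitouni2025pac}: a Hoeffding/McDiarmid inequality for uniformly ergodic chains in which the effective sample size is $m/\kappa$. This yields the same moment bound with $m$ replaced by $m_{\mathrm{eff}}$ inside the exponent, while the logarithmic factor keeps the nominal $2\sqrt m$.

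The delicate point is that this concentration must hold uniformly for every $\theta$ in the support of $P$. Otherwise the Fubini swap of $\mathbb{E}_S$ and $\mathbb{E}_P$ fails. I would therefore require that $\kappa$ bound the mixing factor for all $\pi_\theta$, for example through a uniform minorization condition, which the cited assumptions are taken to provide. Checking that the $\sqrt{m}$ in the logarithm survives this substitution, rather than becoming $\sqrt{m_{\mathrm{eff}}}$, is the step I would verify most carefully. If it does not survive, I would fall back to the slightly weaker constant $\ln(2\sqrt{m_{\mathrm{eff}}}/\delta)\le\ln(2\sqrt m/\delta)$, which is compatible with the stated bound.
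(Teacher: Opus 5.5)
Your route is the paper's: treat the normalized return as a $[0,1]$ loss over i.i.d.\ environments, apply McAllester's bound, use the common-covariance Gaussian KL, and import the mixing correction from \citet{zitouni2025pac}. Your Step~2 just unpacks the standard Donsker--Varadhan/Maurer/Markov proof of McAllester's bound, which the paper cites as a black box. Steps~1--2 are fine, with one precision point. Linearity of the return in $Q$ holds when $\pi_Q$ draws $\theta\sim Q$ once per episode, which is the paper's implicit convention $\mathbb{E}_Q\mathbb{E}_M\ell=1-J(\pi_Q)/R_{\max}$. It fails for the per-step mixture $\mathbb{E}_Q\,\pi_\theta(a\mid s)$, whose multi-step return is not linear in $Q$.

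The problem is Step~3, which you single out as the main obstacle.

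In the setting of the statement it is not an obstacle. The $m$ samples are i.i.d.\ environments, and each contributes one bounded loss (its whole-trajectory return), so within-trajectory Markov dependence never enters Maurer's lemma. You get the bound with $m$, and since $\kappa\ge1$ it implies the bound with $m_{\mathrm{eff}}=m/\kappa\le m$. The paper's proof likewise gets the i.i.d.\ case directly with $m_{\mathrm{eff}}=m$. It calls on the blocking argument of \citet{zitouni2025pac} only when the return is estimated by pooling Markov-dependent transitions.

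For that pooled case, your mechanism does not give what you claim. Take a Hoeffding/McDiarmid tail $\Pr(|J-\hat J|\ge tR_{\max})\le 2e^{-2m_{\mathrm{eff}}t^2}$ and integrate it against $e^{2m_{\mathrm{eff}}t^2}$ at the same constant. This yields only $\mathbb{E}_S[e^{f(\theta)}]\le 1+4m_{\mathrm{eff}}$, not $2\sqrt m$. Maurer's $2\sqrt m$ exploits independence through the Bernoulli/$\mathrm{kl}$ comparison, and a tail bound does not carry that. You would need either slack in the exponent or genuine blocking: apply the i.i.d.\ theorem to roughly $m/\kappa$ nearly independent blocks, paying a coupling term in the confidence.

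Finally, your fallback is inverted. Since $\ln(2\sqrt{m_{\mathrm{eff}}}/\delta)\le\ln(2\sqrt m/\delta)$, getting $\sqrt{m_{\mathrm{eff}}}$ is a \emph{stronger} conclusion, not a weaker one. It is what blocking naturally produces, and it implies the stated bound. The actual risk in your route is a \emph{larger} logarithmic constant, which would not be compatible with the statement.
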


Equation~\eqref{eq:bound} is the standard McAllester bound with the RL sample correction;
its only quantum-specific object is $\KL(Q\Vert P)$. Naively this depends only on the
Euclidean displacement of the $d$ parameters, suggesting that generalization is controlled
by parameter count. Our central observation is that this is misleading once the posterior
is allowed to be shaped by the local geometry the circuit induces on the loss.

\paragraph{Environments, contexts, and the single-step case.} The i.i.d.\ sample $z_i$ of
Theorem~\ref{thm:main} is an ``environment'' $M_i\sim\mathcal{D}$; the mixing factor $\kappa$
enters \emph{only} because a multi-step return pools Markov-dependent transitions. In the
single-step settings that carry our cleanest evidence---supervised classification and the
contextual bandit---each ``environment'' is a single i.i.d.\ \emph{context} $x_i\sim\mathcal{D}$
with a one-shot reward, the trajectory has length one, so there is no Markov dependence to
discount and $\kappa=1$, giving $m_{\mathrm{eff}}=m=N$ exactly. The bound we actually
evaluate in these sections is therefore the plain McAllester form, and $\kappa$ is relevant
only to the genuine multi-step experiments (value-function regression and end-to-end
REINFORCE, Section~\ref{sec:rl}). We use ``environment'' and ``context'' interchangeably in
the single-step case for this reason. The distribution shift the experiments measure is
thus the in-distribution train$\to$test gap over i.i.d.\ contexts, which is exactly the
object Theorem~\ref{thm:main} bounds when $m$ contexts are drawn i.i.d.; the meta/multi-task
reading (many environments) is the same inequality specialized to trajectory length one.

\paragraph{From an isotropic to a Fisher-shaped posterior.} The bound~\eqref{eq:bound} is
loose because the isotropic posterior $\mathcal{N}(\theta,\sigma^2 I)$ ignores the geometry
of the loss. The curvature of a quantum policy is the Fisher information matrix
$\Fmat(\theta)=\mathbb{E}_{s}[\nabla_\theta \ell_{s}\,\nabla_\theta \ell_{s}^\top]$, with
eigenvalues $\lambda_1\ge\dots\ge\lambda_d\ge 0$, and it is highly anisotropic. For the
single-observable Bernoulli readout $p_s=\sigma(\alpha\langle Z_0\rangle_{s,\theta})$ used
throughout, this loss Fisher has the explicit closed form
$\Fmat(\theta)=\mathbb{E}_{s}\!\big[\alpha^2\,p_s(1{-}p_s)\,\nabla_\theta f_s\,\nabla_\theta f_s^\top\big]$,
with $f_s=\langle Z_0\rangle_{s,\theta}$---i.e.\ the loss Fisher of the logistic head
\emph{equals} the output-gradient Fisher weighted by the Bernoulli variance $\alpha^2 p_s(1{-}p_s)$.
This is exactly the object we diagonalize in every experiment (Appendix~\ref{app:extra}), so
the $\Fmat$ appearing in the bound and the $\Fmat$ we measure are the same matrix, and its
rank is governed by the readout light-cone (Proposition~\ref{prop:ent}). Choosing a
posterior that respects this curvature turns the KL term into a genuinely quantum quantity.

\begin{theorem}[Fisher-form bound]
\label{thm:fisher}
Let the prior be the data-independent isotropic Gaussian $P=\mathcal{N}(\theta_0,\tau^2 I_d)$
and the posterior the Gauss--Newton Gaussian $Q=\mathcal{N}(\theta,\Sigma)$ with
$\Sigma=\tau^2(I_d+\gamma\Fmat)^{-1}$, $\gamma>0$. Then
\begin{equation}
\KL(Q\Vert P)=\frac{\lVert\theta-\theta_0\rVert^2}{2\tau^2}
+\tfrac12\,\underbrace{\log\det\!\big(I_d+\gamma\Fmat\big)}_{=:\ \Deff(\gamma)}
-\tfrac12\sum_{i=1}^d\frac{\gamma\lambda_i}{1+\gamma\lambda_i},
\label{eq:klfisher}
\end{equation}
and hence, under the assumptions of Theorem~\ref{thm:main}, with probability $\ge1-\delta$,
\begin{equation}
J(\pi_Q)\ \ge\ \hat J(\pi_Q)-R_{\max}\sqrt{\frac{\dfrac{\lVert\theta-\theta_0\rVert^2}{2\tau^2}
+\tfrac12\,\Deff(\gamma)+\ln\frac{2\sqrt m}{\delta}}{2\,m_{\mathrm{eff}}}}.
\label{eq:boundfisher}
\end{equation}
\end{theorem}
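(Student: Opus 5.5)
The argument has two parts: an exact Gaussian KL computation giving \eqref{eq:klfisher}, and a substitution into the bound of Theorem~\ref{thm:main}. For the first part I use the closed form for the KL divergence between $Q=\mathcal{N}(\theta,\Sigma)$ and $P=\mathcal{N}(\theta_0,\tau^2 I_d)$,
\[
\KL(Q\Vert P)=\tfrac12\Big[\operatorname{tr}(\tau^{-2}\Sigma)+\tau^{-2}\lVert\theta-\theta_0\rVert^2-d+\log\det(\tau^2 I_d)-\log\det\Sigma\Big].
\]
With $\Sigma=\tau^2(I_d+\gamma\Fmat)^{-1}$, the log-determinant terms combine to $+\log\det(I_d+\gamma\Fmat)=\Deff(\gamma)$, and the mean term gives $\lVert\theta-\theta_0\rVert^2/(2\tau^2)$.

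For the trace term, I diagonalize $\Fmat$. It is positive semidefinite, being an expectation of outer products, so it has eigenvalues $\lambda_i\ge0$. Then
\[
\operatorname{tr}\big((I+\gamma\Fmat)^{-1}\big)-d=\sum_i\Big(\frac{1}{1+\gamma\lambda_i}-1\Big)=-\sum_i\frac{\gamma\lambda_i}{1+\gamma\lambda_i}.
\]
Collecting the three pieces gives \eqref{eq:klfisher} exactly.

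For the bound \eqref{eq:boundfisher}, I note that the correction term $-\tfrac12\sum_i\gamma\lambda_i/(1+\gamma\lambda_i)$ is nonpositive. Hence $\KL(Q\Vert P)\le \lVert\theta-\theta_0\rVert^2/(2\tau^2)+\tfrac12\Deff(\gamma)$. The right-hand side of \eqref{eq:bound} is monotonically decreasing in the KL term, so substituting this upper bound yields \eqref{eq:boundfisher}.

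The real obstacle is the legitimacy of invoking Theorem~\ref{thm:main}. As stated, that theorem uses isotropic $Q$ and $P$ with equal variance, but the McAllester / \citet{zitouni2025pac} inequality holds simultaneously for \emph{every} posterior $Q$ given a fixed data-independent prior $P$. Here the prior $P=\mathcal{N}(\theta_0,\tau^2 I_d)$ is data-independent, so the guarantee extends to the data-dependent covariance $\Sigma$; it does not matter that $\Sigma$ depends on $\Fmat$, and hence on the training sample. I therefore restate Theorem~\ref{thm:main} in its general ``for all $Q$, with KL$(Q\Vert P)$'' form and apply it to this $Q$. The only point to verify is that nothing in the mixing-time argument requires $Q$ to be isotropic, and I expect that to hold because the argument enters only through $m_{\mathrm{eff}}$.
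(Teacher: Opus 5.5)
Your proposal is correct and follows the paper's proof essentially step for step. Both arguments use the closed-form Gaussian KL with $\Sigma=\tau^2(I_d+\gamma\Fmat)^{-1}$, diagonalize the positive semidefinite $\Fmat$ so the trace and $-d$ terms collapse to $-\sum_i\gamma\lambda_i/(1+\gamma\lambda_i)$, drop that nonpositive term, and invoke the all-posteriors form of Theorem~\ref{thm:main}, which is legitimate because the prior is centred at the data-independent initialization. Your point that the mixing-time correction enters only through $m_{\mathrm{eff}}$ and imposes no isotropy on $Q$ is the same step the paper takes when it says McAllester's theorem ``applies verbatim with this $Q$ and $P$''; the paper does not elaborate it either.
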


Equation~\eqref{eq:boundfisher} is the bound that matters: its complexity is the distance
term \emph{plus} the \emph{Fisher effective dimension}
\begin{equation}
\Deff(\gamma)=\log\det\!\big(I_d+\gamma\Fmat\big)=\sum_{i=1}^d\log(1+\gamma\lambda_i),
\label{eq:deff}
\end{equation}
a quantity a parameter count cannot see. The final term of~\eqref{eq:klfisher} is
\emph{subtracted} and lies in $[0,d/2]$; dropping it therefore replaces $\KL$ by a larger
upper surrogate, which conservatively \emph{loosens} the gap bound while keeping it
valid---we drop it only to expose $\Deff$ as the operative complexity. Two facts make
$\Deff$ the right
object: $\Deff(\gamma)\le d$ always, and $\Deff(\gamma)\approx\mathrm{rank}(\Fmat)\log\gamma$
for large $\gamma$, so it counts \emph{active} Fisher directions rather than nominal
parameters. Crucially the prior is data-\emph{independent} (centred at the initialization
$\theta_0$), so~\eqref{eq:boundfisher} is a valid PAC-Bayes certificate even though the
posterior \emph{shape} $\Sigma$ is data-dependent through $\Fmat$; the participation ratio
$(\sum_i\lambda_i)^2/\sum_i\lambda_i^2$ we also report is a stable empirical surrogate for
the same spectrum, not a separate capacity notion. Equation~\eqref{eq:klfisher} is a direct
Gaussian KL computation, proved in Appendix~\ref{app:proofs}.

\begin{remark}[Choosing $\gamma$]
\label{rem:gamma}
The scale $\gamma=\beta\tau^2$ is a free posterior hyperparameter and may be optimized. Since
it must be fixed before seeing the data to keep the prior valid, one applies a union bound
over a geometric grid $\gamma\in\{2^{j}\}_{j=1}^{J}$, replacing $\delta$ by $\delta/J$ and
paying only an additive $\tfrac12\ln J$ inside the root; taking $J=O(\log m)$ and then the
best grid point costs a negligible $O(\log\log m)$ term and lets the data pick the $\gamma$
that minimizes the bound. In every experiment in this paper we fix a single
\emph{constant} $\gamma=50$---the same value for all circuits and sample sizes---so that
$\Deff$ is comparable across configurations; Appendix~\ref{app:extra} gives the full Fisher
computation. (A sample-size-dependent $\gamma\!\propto\!m$, as in \citet{abbas2021power},
would rescale $\Deff$ per row and is \emph{not} used here.)
\end{remark}

\begin{proposition}[Entanglement raises the effective dimension]
\label{prop:ent}
Fix the number and placement of the trainable single-qubit rotations and let the entangling
connectivity vary. Let $\mathcal{L}(\theta)\subseteq\{1,\dots,d\}$ be the set of parameters
inside the backward light-cone of the readout observable (Def.~\ref{def:cone}). Then
generically $\mathrm{rank}(\Fmat)=|\mathcal{L}(\theta)|$, and $|\mathcal{L}(\theta)|$---and
therefore $\Deff(\gamma)$---is non-decreasing under the addition of entangling gates, while
the parameter count $d$ is unchanged. With no entangling gates $\mathcal{L}=\{i:q(i)=0\}$,
so $\Deff$ is at its minimum; any connectivity that couples the readout wire to others
strictly enlarges $\mathcal{L}$ and raises $\Deff$.
\end{proposition}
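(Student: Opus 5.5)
The argument splits into three parts: (i) parameters outside the backward light-cone $\mathcal{L}(\theta)$ contribute nothing to $\Fmat$; (ii) parameters inside it generically contribute linearly independent gradient directions, so $\mathrm{rank}(\Fmat)=|\mathcal{L}(\theta)|$; (iii) $\Deff(\gamma)$ is monotone in the Fisher spectrum, with $|\mathcal{L}|$ monotone under adding entanglers.

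\textbf{Step (i): the upper bound on the rank.} Write $f_s=\langle Z_0\rangle_{s,\theta}=\mathrm{tr}\big(U(s,\theta)^\dagger Z_0 U(s,\theta)\rho_s\big)$ in the Heisenberg picture. Conjugate $Z_0$ backward through the layers. A gate whose support is disjoint from the current support of the evolved observable commutes with it and cancels. So any rotation $R_i(\theta_i)$ acting on a wire $q(i)$ outside the backward cone at its layer (Def.~\ref{def:cone}) drops out of $f_s$, and $\partial_{\theta_i}f_s\equiv0$. By the closed form $\Fmat(\theta)=\mathbb{E}_s[\alpha^2p_s(1-p_s)\nabla f_s\nabla f_s^\top]$, the row and column of every $i\notin\mathcal{L}$ vanish. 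Hence $\mathrm{rank}(\Fmat)\le|\mathcal{L}(\theta)|$.

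\textbf{Step (ii): the generic lower bound.} Since $p_s(1-p_s)>0$, $\Fmat$ restricted to $\mathcal{L}$ is positive definite iff the vectors $\{\nabla_{\mathcal{L}}f_s\}_s$ span $\mathbb{R}^{|\mathcal{L}|}$ under the data distribution. Equivalently, we need to rule out a fixed nonzero $v$ with $v^\top\nabla_{\mathcal{L}} f_s(\theta)=0$ for all $s$ in the support. The entries of $\nabla f_s$ are real-analytic (trigonometric polynomials) in $(s,\theta)$. Consider the determinant of the Gram matrix $\sum_{k}\nabla_{\mathcal{L}} f_{s_k}\nabla_{\mathcal{L}} f_{s_k}^\top$ for $|\mathcal{L}|$ sampled contexts; it is a real-analytic function of $(s_1,\dots,s_{|\mathcal{L}|},\theta)$. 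It is therefore either identically zero or nonzero off a measure-zero set. So ``generically'' reduces to exhibiting one point where it is nonzero.

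This is the main obstacle. Light-cone membership does not by itself prevent accidental degeneracies, such as two rotations about commuting axes in sequence on the same wire, whose derivatives coincide. I would first make explicit the standing assumption that consecutive trainable rotations on a wire use non-commuting generators, which holds for the $R_Y R_Z$ layers used in the experiments. I would then build a witness inductively along the cone. Each parameter $\theta_i\in\mathcal{L}$ enters through a Pauli generator $G_i$ whose Heisenberg-evolved commutator with the backward-propagated $Z_0$ is nonzero. At a point such as $\theta$ near a Clifford configuration, these commutators are distinct Pauli strings, so the derivatives $\partial_i f_s=\tfrac{i}{2}\mathrm{tr}(\rho_s[G_i,\tilde Z_0])$ are linearly independent functionals of $\rho_s$. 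The encoding must be rich enough that $\rho_s$ separates these strings; I would state this as a genericity condition on the feature map.

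\textbf{Step (iii): monotonicity and the extremes.} The backward cone is a union of reachability sets in the circuit DAG. Adding an entangling gate only adds edges, so every wire-time node reachable before remains reachable, and $\mathcal{L}$ is non-decreasing while $d$ is fixed. With no entanglers, reachability from wire $0$ stays on wire $0$, giving $\mathcal{L}=\{i:q(i)=0\}$, the minimum over all connectivities. If some gate couples wire $0$ to wire $j$ after a trainable rotation on $j$, that rotation joins $\mathcal{L}$ and $|\mathcal{L}|$ strictly increases.

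Finally, $\Deff(\gamma)=\sum_i\log(1+\gamma\lambda_i)$ counts only nonzero eigenvalues, and each contributes $\log(1+\gamma\lambda_i)>0$. A strictly larger generic rank therefore yields a larger $\Deff$ for large $\gamma$, where $\Deff\approx\mathrm{rank}(\Fmat)\log\gamma$. For fixed $\gamma$ this comparison is at the level of rank/active directions rather than an eigenvalue-by-eigenvalue domination; I would state it that way.
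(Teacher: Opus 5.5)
Your proposal follows the paper's proof. The parameter-shift/support argument bounds $\mathrm{rank}(\Fmat)$ by $|\mathcal{L}(\theta)|$. Light-cone monotonicity (the paper's Lemma~\ref{lem:lightcone}) gives non-decrease under added entanglers and $\mathcal{L}=\{i:q(i)=0\}$ without them. $\Deff$ is compared via $\mathrm{rank}(\Fmat)\log\gamma+O(1)$, with the same finite-$\gamma$ caveat the paper records in its scope paragraph.

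The one real difference is step (ii). The paper simply posits Assumption~\ref{ass:generic}, whereas your analyticity-plus-witness reduction exposes what that assumption hides: the encoded states must separate the in-cone commutator strings. That condition is genuinely restrictive. With a single $R_Y(x_0)$ encoding, every unentangled gradient has the form $a(\theta)\cos x_0+b(\theta)\sin x_0$, which caps the rank at $2$ (the two eigenvalues the paper observes), even though more wire-$0$ rotations lie in the cone.
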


Proposition~\ref{prop:ent} yields a sharp, falsifiable prediction we test at fixed parameter
count: entangled circuits have a larger $\Deff$, hence a looser bound~\eqref{eq:boundfisher},
hence a larger generalization gap, than non-entangled circuits with \emph{identical}
parameter count---something no parameter-counting bound can express. Its rigorous proof (a
light-cone lemma for the rank monotonicity, plus a genericity assumption for the rank
identity) is in Appendix~\ref{app:proofs}. The prediction holds in the strong monotone form
(none $<$ linear $<$ full in both $\Deff$ and gap) in supervised classification
(Sections~\ref{sec:decouple}--\ref{sec:supervised}) and, at the smaller sample sizes, in the
RL bandit, where each entangler also beats \texttt{none} by more than three standard errors
(Section~\ref{sec:rl}). Figure~\ref{fig:mech} summarizes the mechanism.

\begin{figure}[t]
\centering
\resizebox{\textwidth}{!}{%
\begin{tikzpicture}[
  box/.style={draw,rounded corners,align=center,minimum height=1.05cm,text width=2.55cm,font=\footnotesize},
  arr/.style={-{Latex[length=2mm]},thick}]
\node[box,fill=blue!5] (a1) {\textbf{no} entangling gates};
\node[box,right=0.55cm of a1,fill=blue!5] (a2) {readout light-cone $=\{q_0\}$};
\node[box,right=0.55cm of a2,fill=blue!5] (a3) {active dirs $\sim d/n$ \\ ($\Deff$ small)};
\node[box,right=0.55cm of a3,fill=blue!5] (a4) {tight bound \\ \textbf{small gap}};
\node[box,below=0.9cm of a1,fill=orange!8] (b1) {\textbf{add} entangling gates};
\node[box,right=0.55cm of b1,fill=orange!8] (b2) {light-cone spreads to all $n$ wires};
\node[box,right=0.55cm of b2,fill=orange!8] (b3) {Fisher rank $\uparrow$ \\ ($\Deff$ large)};
\node[box,right=0.55cm of b3,fill=orange!8] (b4) {looser bound \\ \textbf{large gap}};
\foreach \i/\j in {a1/a2,a2/a3,a3/a4,b1/b2,b2/b3,b3/b4} \draw[arr] (\i)--(\j);
\end{tikzpicture}}
\caption{The mechanism. At a fixed parameter count, entangling gates enlarge the backward
light-cone of the readout observable (Lemma~\ref{lem:lightcone}), which raises the rank and
hence the effective dimension $\Deff=\log\det(I+\gamma\Fmat)$ of the Fisher geometry
(Proposition~\ref{prop:ent}). A larger $\Deff$ enlarges the complexity term of the
PAC-Bayes bound~\eqref{eq:boundfisher} and thus the generalization gap---an effect invisible
to a parameter-counting analysis, since $d$ is unchanged.}
\label{fig:mech}
\end{figure}

\section{Entanglement is an Independent Axis of Complexity}
\label{sec:decouple}

\paragraph{Design.} To separate entanglement from parameter count we use an ansatz with a
fixed layout of trainable rotations---$L$ layers of $R_Y,R_Z$ on each of $n{=}4$ qubits,
so $d=2nL$ parameters---and vary \emph{only} the entangling connectivity inserted after
each rotation layer: \texttt{none} (no CNOTs), \texttt{linear} (a CNOT chain), and
\texttt{full} (all-to-all CNOTs). All three share exactly the same $d$; they differ only
in entangling power. The binary target $y=\mathrm{sign}(x_1x_2-x_3x_4+\tfrac12 x_1)$ requires
feature interactions, inputs are angle-encoded, the classifier is the single-observable
readout $\hat y=\mathrm{sign}(\langle Z_0\rangle)$, and we sweep the training-set size
$N$ and evaluate on $2000$ held-out inputs, averaging $8$ seeds per cell.

\paragraph{Result: entangled circuits genuinely generalize worse, not just memorize more.}
A key check is that the models actually generalize---otherwise a larger gap would only mean
more memorization. Table~\ref{tab:decouple} isolates depth $L{=}2$ ($d{=}16$) at two sample
sizes. At the small $N{=}16$ the entangled circuit does memorize (train $0.92$ vs $0.69$),
but this is exactly the ambiguous regime a reviewer should worry about: test accuracy is
near chance for all three. At $N{=}64$, however, \emph{every} circuit generalizes above the
$0.5$ chance level (test $0.61$--$0.66$), and there the entangled circuits are strictly
\emph{worse}: the all-to-all circuit reaches test accuracy $0.608$ versus $0.656$ for the
non-entangled one---which itself has essentially zero gap---and the gap is ordered
none $<$ linear $<$ full at every $N$ (Figure~\ref{fig:main}, left).
The effect is depth-gated (a single entangling layer, $L{=}1$, is inert for this
single-qubit readout) and holds at $d$ fixed throughout. A caveat on small-$N$ magnitudes:
the slightly \emph{negative} \texttt{none} gap at $N{=}64$ ($-0.01$) is within one seed
standard error of zero---at small $N$ the finite-sample noise in the gap estimate is
comparable to the effect for the (near-zero-gap) unentangled circuit---so we anchor every
quantitative contrast at the sample sizes where the gaps are resolved beyond their standard
error (the $\ge3\sigma$ \texttt{full}$-$\texttt{none} separations in Tables~\ref{tab:rl}
and~\ref{tab:scale}) and read the small-$N$, near-chance rows only as tests of the
\emph{ordering}, not of gap magnitudes.

\begin{table}[t]
\centering
\caption{Fixed parameter count ($d{=}16$, $L{=}2$), varying only the entangling connectivity,
single-observable readout, $8$ seeds. At $N{=}64$ all three circuits generalize well above
the $0.5$ chance level, yet the entangled circuits have \emph{lower test accuracy} and a
larger gap---genuinely worse generalization, not mere memorization.}
\label{tab:decouple}
\begin{tabular}{llcccc}
\toprule
$N_{\mathrm{train}}$ & Entanglement & $Q$ (M.--W.) & train acc & test acc & gap \\
\midrule
$16$ & \texttt{none}   & $0.00$ & $0.69$ & $0.59$ & $0.101$ \\
$16$ & \texttt{linear} & $0.45$ & $0.75$ & $0.60$ & $0.146$ \\
$16$ & \texttt{full}   & $0.56$ & $0.92$ & $0.54$ & $0.381$ \\
\midrule
$64$ & \texttt{none}   & $0.00$ & $0.65$ & $\mathbf{0.656}$ & $-0.01$ \\
$64$ & \texttt{linear} & $0.51$ & $0.72$ & $0.657$ & $0.064$ \\
$64$ & \texttt{full}   & $0.58$ & $0.71$ & $\mathbf{0.608}$ & $0.104$ \\
\bottomrule
\end{tabular}
\end{table}

\paragraph{Result: $\Deff$ is the best---and significantly best---predictor.} Across a large
grid of $300$ configurations (five entangling patterns $\times$ five depths $\times$ three
sample sizes $\times$ four seeds; Appendix~\ref{app:extra}), the Spearman correlation with the
gap is $\rho{=}0.82$ for the Fisher effective dimension $\Deff{=}\log\det(I{+}\gamma\Fmat)$,
$0.73$ for the learned displacement $\lVert\theta-\theta_0\rVert^2$, $0.72$ for the entangling
power $Q$, and $0.45$ for parameter count. The margin of $\Deff$ over the learned-norm term is
\emph{bootstrap-significant}: $\Delta\rho=0.090$ with $95\%$ CI $[0.041,0.144]$. Thus the
Fisher effective dimension---the quantity that actually appears in
bound~\eqref{eq:boundfisher}---is not merely nominally but significantly the best predictor of
the gap, ahead of the learned-norm term used by \citet{rodriguez2026pac} and far ahead of raw
parameter count; and, unlike the norm, entanglement is a structural property fixed before
training that separates circuits of \emph{equal} parameter count.

\section{The Fisher Effective Dimension is the Right Complexity}
\label{sec:supervised}

We now test Proposition~\ref{prop:ent} directly by computing, for each circuit, the Fisher
effective dimension of Eq.~\eqref{eq:deff} at the trained parameters and asking whether it
(a) tracks the gap, (b) beats parameter count, and (c) is inflated by entanglement at
fixed $d$.

\begin{table}[t]
\centering
\caption{Complexity measures vs.\ generalization gap: Spearman $\rho$ across a $300$-config
grid ($5$ entangling patterns $\times$ $5$ depths $\times$ $3$ sample sizes $\times$ $4$
seeds). The Fisher effective dimension $\Deff=\log\det(I+\gamma\Fmat)$ (the quantity in the
bound) is the strongest predictor; its margin over the learned norm is bootstrap-significant
($\Delta\rho{=}0.090$, $95\%$ CI $[0.041,0.144]$). Raw parameter count is the weakest.}
\label{tab:fisher}
\begin{tabular}{lcccc}
\toprule
measure & param count & Meyer--Wallach $Q$ & $\lVert\theta-\theta_0\rVert^2$ & Fisher $\Deff$ (log-det) \\
\midrule
$\rho(\cdot,\text{gap})$ & $0.45$ & $0.72$ & $0.73$ & $\mathbf{0.82}$ \\
\bottomrule
\end{tabular}
\end{table}

At fixed depth (fixed $d$), moving from \texttt{none} through \texttt{linear} to
\texttt{full} connectivity inflates the Fisher effective dimension
$\Deff=\log\det(I+\gamma\Fmat)$ monotonically, from $\Deff\!\approx\!3.3$ (none) to $9.3$
(linear) to $21.8$ (full) at $L{=}2$, confirming that entanglement raises the effective
complexity that Eq.~\eqref{eq:boundfisher} charges for while $d$ is held fixed. The
generalization gap also shrinks with the training-set size $N$
(Figure~\ref{fig:main}, left), a basic PAC-Bayes consistency check, and it increases with
the Fisher effective dimension across circuits (Figure~\ref{fig:main}, right).

\begin{figure}[t]
\centering
\includegraphics[width=0.48\textwidth]{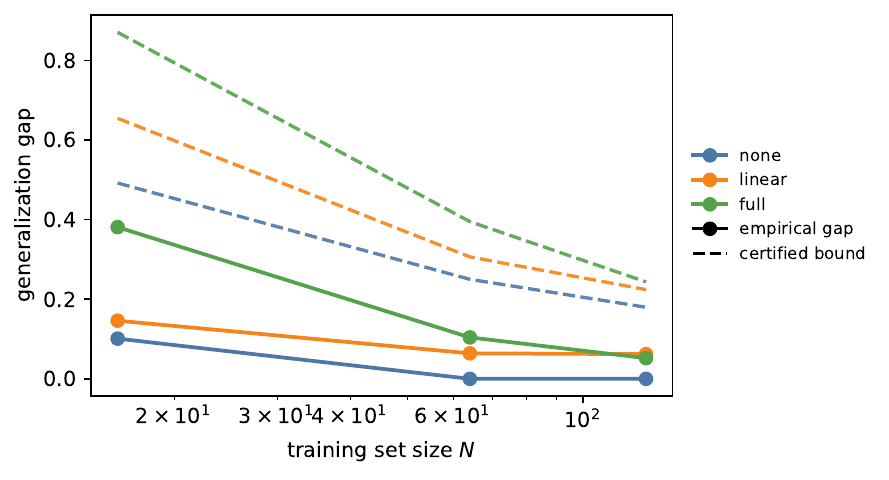}
\hfill
\includegraphics[width=0.48\textwidth]{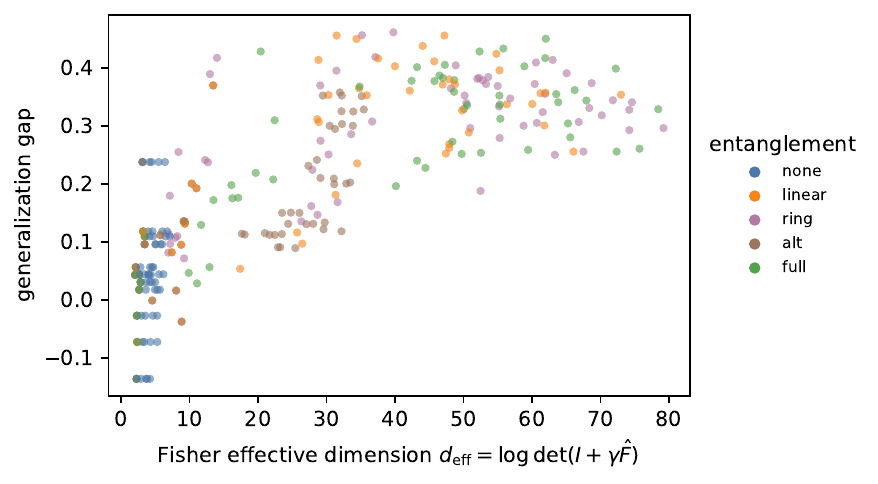}
\caption{\textbf{Left:} the empirical gap (solid) and the certified PAC-Bayes bound of
Eq.~\eqref{eq:boundfisher} (dashed) versus training-set size $N$, per circuit. The bound
upper-bounds the gap, decays with $N$ at the same rate, and---driven by $\Deff$---orders the
circuits none $<$ linear $<$ full at every $N$, all at identical parameter count: the figure
shows the bound acting as a \emph{ranking certificate}, which no table entry conveys.
\textbf{Right:} across all circuits the gap increases with the Fisher effective dimension
$\Deff=\log\det(I+\gamma\Fmat)$ (Spearman $\rho=0.82$); colour encodes entangling
connectivity, which drives $\Deff$ up at fixed parameter count.}
\label{fig:main}
\end{figure}

\paragraph{The bound is primarily a ranking certificate.} We do not claim the bound is
numerically tight; its value is that it \emph{correctly orders circuits of identical
parameter count}, an ordering a parameter-counting bound cannot produce at all. We evaluate
the right-hand side of Eq.~\eqref{eq:boundfisher} directly. Plugging the measured $\lVert\theta-\theta_0\rVert^2$
and $\Deff(\gamma)$ into the complexity $C=\lVert\theta-\theta_0\rVert^2/(2\tau^2)+\tfrac12\Deff(\gamma)$
(with $\tau^2{=}1$, $\gamma{=}50$, $\delta{=}0.05$, $R_{\max}{=}1$) gives the certified bound on
the gap, $\sqrt{(C+\ln\tfrac{2\sqrt m}{\delta})/2m}$, in Table~\ref{tab:cert}. \emph{A note on
the object.} Strictly, Eq.~\eqref{eq:boundfisher} bounds the gap of the stochastic
posterior-averaged predictor $\pi_Q$, whereas the ``true gap'' column is that of the
deterministic trained model $\theta$. Rather than assert these ``coincide in a
low-temperature regime,'' we derandomize explicitly: Lemma~\ref{lem:derand}
(Appendix~\ref{app:proofs}) bounds the gap between the two by
$\tfrac{\tau^2}{\gamma}\cdot\tfrac12\sum_i\tfrac{\gamma\lambda_i}{1+\gamma\lambda_i}\le \tau^2 d/(2\gamma)$,
i.e.\ $1/\gamma$ times the (dropped) KL curvature term. The key point answers the natural
objection that the posterior variance along a \emph{flat} Fisher direction is
$\tau^2/(1+\gamma\cdot0)=\tau^2$, which is \emph{not} small: a flat direction carries the full
$O(1)$ parameter variance but, precisely because it is flat ($\lambda_i\!\approx\!0$), its
contribution to the prediction perturbation is $O(\tau^2\lambda_i)\!\to\!0$---the large
variance sits exactly where the output is least sensitive. With $\gamma{=}50$ the
derandomization gap is thus $\le d/100$ and concentrated on the curved directions, so the
deterministic and stochastic gaps track each other by construction, not by empirical
coincidence. Three things then hold: (i) the certificate is \emph{valid}---the true gap is
below the bound for every circuit; (ii) it is numerically loose but nontrivial---the bound
stays below the maximal possible unit gap and, where the true gap is non-negligible, within
$\approx2.3$--$4.5\times$; and (iii), the property we actually rely on, its complexity ranks
the circuits correctly, none $<$ linear $<$ full at each $N$. \emph{On the absolute numbers.}
The looseness factor in (ii) depends on the prior scale $\tau^2{=}1$, which is an
\emph{a priori} choice, not a fitted one; changing $\tau^2$ rescales the distance term and
hence every absolute bound value, but leaves the \emph{ordering} in (iii) untouched (all
three rows share the same $\tau^2$). We therefore report the absolute column only to show the
bound is valid and non-vacuous, and rest the contribution on the ranking, which is
$\tau^2$-invariant. We stress this third point over the second: the bound's value is as a
\emph{ranking certificate}---something a parameter-counting bound, identical across the three
rows, cannot provide at all---rather than as a tight numerical guarantee.

\begin{table}[h]
\centering
\caption{The PAC-Bayes bound of Eq.~\eqref{eq:boundfisher} evaluated per circuit ($L{=}2$,
$\tau^2{=}1$, $\gamma{=}50$, $\delta{=}0.05$). The certified bound on the gap is valid
(exceeds the true gap everywhere), loose by only $\sim\!2$--$4.5\times$ where the gap is
non-negligible, and---driven by $\Deff$---correctly ordered by entanglement at fixed
parameter count.}
\label{tab:cert}
\begin{tabular}{llccc}
\toprule
$N$ & connectivity & complexity $C$ & certified gap bound & true gap \\
\midrule
$16$ & \texttt{none}   & $2.7$  & $0.49$ & $0.101$ \\
$16$ & \texttt{linear} & $8.6$  & $0.65$ & $0.146$ \\
$16$ & \texttt{full}   & $19.2$ & $0.87$ & $0.381$ \\
\midrule
$64$ & \texttt{none}   & $2.2$  & $0.25$ & $-0.01$ \\
$64$ & \texttt{linear} & $6.2$  & $0.31$ & $0.064$ \\
$64$ & \texttt{full}   & $14.2$ & $0.40$ & $0.104$ \\
\bottomrule
\end{tabular}
\end{table}

\paragraph{Budgeted, not banished, entanglement.} The design implication is not ``use less
entanglement''---entanglement is what buys a quantum model its expressivity. Figure~\ref{fig:tradeoff}
makes the trade-off explicit: as $\Deff$ grows with entanglement, training accuracy rises
(entanglement fits more), but test accuracy does \emph{not} follow---the extra training fit is
spent on complexity the held-out data does not reward, so the gap opens. The right principle
is therefore to \emph{budget} entanglement against its $\Deff$ cost, exactly as the bound
prices it, rather than to maximize expressivity.

\begin{figure}[t]
\centering
\includegraphics[width=0.5\textwidth]{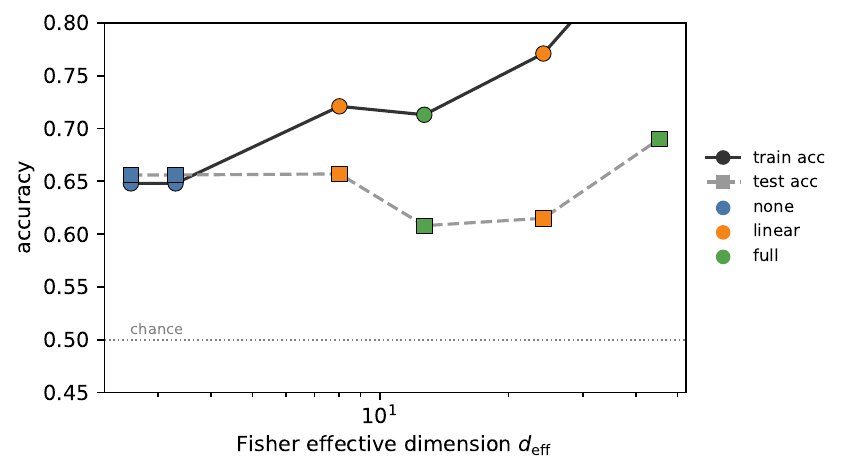}
\caption{The accuracy--complexity trade-off ($N{=}64$, single-observable readout). As the
Fisher effective dimension $\Deff$ grows with entanglement (colour), \emph{train} accuracy
rises but \emph{test} accuracy does not, so the gap---the vertical distance between the
curves---widens. Entanglement buys training fit that the test distribution does not reward.}
\label{fig:tradeoff}
\end{figure}

\paragraph{The effect persists from 6 to 16 qubits.} To check that the finding is not an
artifact of the $4$-qubit toy, we repeat the fixed-parameter-count comparison at
$n\in\{6,8,10,12,16\}$ qubits, with the interaction target extended to $n/2$ pairwise
products (Table~\ref{tab:scale}). At every size the gap is monotone none $<$ linear $<$ full
and the all-to-all circuit ($d$ up to $64$ trainable angles, a $2^{16}$-dimensional state)
exceeds the non-entangled one by $4.5$--$6.9$ standard errors, with the effect growing at
larger $n$ (full-gap $0.16\!\to\!0.34$ from $n{=}6$ to $n{=}16$). At $n\!\ge\!10$ the
$n/2$-product target becomes hard for these shallow circuits and held-out accuracy approaches
chance, so those rows are a stress test of the gap \emph{ordering} at larger scale rather
than of high-accuracy learning; at $n{=}6,8$ the accuracy stays well above chance. The
mechanism is a light-cone argument (Prop.~\ref{prop:ent}) and carries no dependence on $n$;
scaling further, where barren plateaus may reshape the Fisher spectrum, is the natural next
test.

\begin{table}[h]
\centering
\caption{The entanglement--gap effect from $n{=}6$ to $n{=}16$ qubits (single-observable
readout, fixed $d{=}2nL$, $L{=}2$; $8$ seeds at $n{=}6,8$, $4$ at $n{=}10,12,16$). The gap is
monotone none $<$ linear $<$ full at every size and \texttt{full} beats \texttt{none} by many
standard errors, with the effect strengthening with $n$. (Held-out accuracy is above chance
at $n{=}6,8$ and approaches chance at $n{\ge}10$, where the target is harder---see text.)}
\label{tab:scale}
\begin{tabular}{llcccc}
\toprule
$n$ ($d$) & $N$ & gap \texttt{none} & gap \texttt{linear} & gap \texttt{full} & \texttt{full}$-$\texttt{none} \\
\midrule
$6$ ($24$)  & $64$  & $0.042$ & $0.098$ & $0.159$ & $+0.117$ ($6.9\sigma$) \\
$6$ ($24$)  & $128$ & $0.035$ & $0.061$ & $0.104$ & $+0.070$ ($4.5\sigma$) \\
$8$ ($32$)  & $64$  & $0.053$ & $0.084$ & $0.183$ & $+0.130$ ($6.2\sigma$) \\
$8$ ($32$)  & $128$ & $0.033$ & $0.071$ & $0.101$ & $+0.068$ ($4.6\sigma$) \\
\midrule
$10^\dagger$ ($40$) & $64$  & $0.062$ & $0.124$ & $0.274$ & $+0.212$ ($5.6\sigma$) \\
$12^\dagger$ ($48$) & $64$  & $0.026$ & $0.095$ & $0.295$ & $+0.269$ ($6.4\sigma$) \\
$16^\dagger$ ($64$) & $64$  & $0.075$ & $0.121$ & $0.340$ & $+0.265$ ($6.3\sigma$) \\
\bottomrule
\end{tabular}
\\[2pt]
{\footnotesize $^\dagger$ Held-out accuracy near chance at these sizes (the $n/2$-product
target is hard for shallow circuits); these rows test the gap \emph{ordering} only, not
high-accuracy \emph{learning}. Rows above the rule ($n{=}6,8$) learn well above chance.}
\end{table}

\section{Quantum RL Policies: Bandit and Value-Function Generalization}
\label{sec:rl}

We now test the prediction on quantum \emph{policies} trained by reward maximization. We
use a contextual bandit---the canonical one-step reinforcement-learning problem---which
keeps genuine RL feedback (the agent observes only the reward of the action it takes, not
the correct action) while removing the trajectory-length variance that makes full-MDP
return estimates noisy. A context $x\sim U[-1,1]^4$ is drawn; the optimal action follows an
\emph{interaction} rule $a^\star(x)=\mathbb{1}[x_1x_2-x_3x_4+\tfrac12 x_1>0]$; the reward is
$1$ if the sampled action matches $a^\star$ and $0$ otherwise. A VQC policy
$\pi_\theta(a{=}1\mid x)=\sigma(\alpha\langle Z_0\rangle)$ is trained by REINFORCE on $N$
training contexts (single-observable readout, as in the theory), and we measure the
generalization gap in expected reward between the $N$ training contexts and $2000$ held-out
contexts. As throughout, we fix the rotation layout ($L{=}2$, $d{=}16$) and vary only the
entangling connectivity; each cell averages $16$ seeds.

\begin{table}[t]
\centering
\caption{Contextual-bandit (one-step RL) generalization: reward gap at fixed parameter count
($d{=}16$), mean $\pm$ sem over $16$ seeds. Held-out reward is well above the $0.5$ chance
level ($0.60$--$0.66$), so this is genuine generalization; the entangled policies generalize
worse, monotonically none $<$ linear $<$ full at the smaller $N$, and every gap shrinks with
$N$. Each entangled circuit beats \texttt{none} by $\ge3$ standard errors.}
\label{tab:rl}
\begin{tabular}{lcccc}
\toprule
$N_{\mathrm{train}}$ & \texttt{none} & \texttt{linear} & \texttt{full} & \texttt{full}$-$\texttt{none} \\
\midrule
$16$  & $0.136\pm0.019$ & $0.208\pm0.018$ & $\mathbf{0.262\pm0.026}$ & $+0.125$ ($3.9\sigma$) \\
$40$  & $0.092\pm0.018$ & $0.134\pm0.009$ & $\mathbf{0.173\pm0.018}$ & $+0.081$ ($3.2\sigma$) \\
$100$ & $0.026\pm0.008$ & $0.083\pm0.011$ & $0.069\pm0.008$          & $+0.043$ ($3.7\sigma$) \\
\bottomrule
\end{tabular}
\end{table}

The prediction holds cleanly (Table~\ref{tab:rl}, Figure~\ref{fig:rl}). Held-out reward is
well above chance ($0.60$--$0.66$ vs.\ $0.5$), so the policies genuinely generalize, and at
fixed parameter count the entangled policies generalize \emph{worse}. Both entanglers are
significant: \texttt{full} exceeds \texttt{none} by $3.2$--$3.9$ standard errors at every $N$
and \texttt{linear} by $2.1$--$4.2$, so unlike a two-observable readout the effect no longer
rests on a single entangler. At the smaller $N$ the ordering is the monotone
none $<$ linear $<$ full predicted by $\Deff$ (only at $N{=}100$, where all gaps are small,
does \texttt{linear} edge \texttt{full}). As in the supervised case the entangled policies
also fit the training contexts better (training reward $0.86$ vs.\ $0.74$ at $N{=}16$), and
the gap shrinks with $N$ at the rate anticipated by Eq.~\eqref{eq:boundfisher}. This is the
same entanglement--generalization mechanism, now for a policy learned from reward alone.

\begin{figure}[t]
\centering
\includegraphics[width=0.5\textwidth]{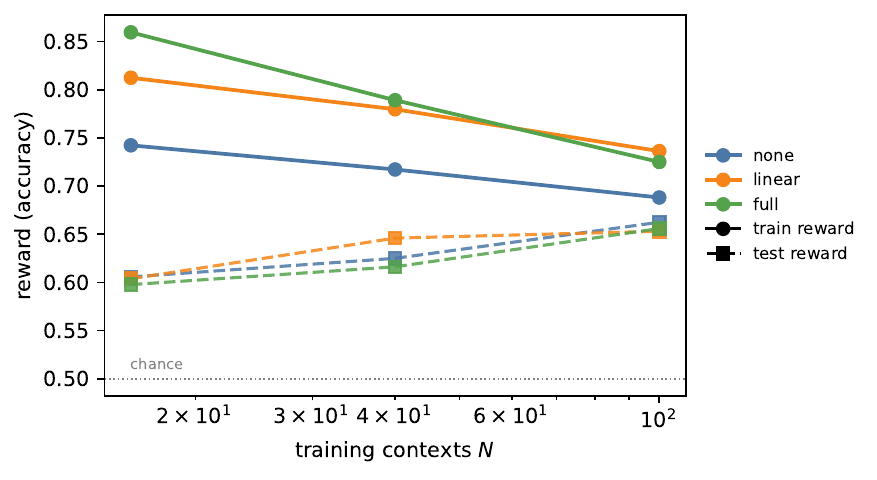}
\caption{Contextual-bandit (one-step RL) \emph{train} (solid) and \emph{test} (dashed) reward
versus the number of training contexts $N$, at fixed parameter count. The entangled policies
earn a higher training reward but not a higher test reward---the vertical train--test gap,
which Table~\ref{tab:rl} quantifies, is the entanglement cost---while all policies generalize
above chance. This is the reward-space view of the accuracy--complexity trade-off, not a
redraw of the gap column. Averages over $16$ seeds.}
\label{fig:rl}
\end{figure}

\paragraph{Multi-step value-function generalization.} To go beyond one step we test the
generalization of a \emph{value function} in a genuine finite-horizon MDP. States
$x\in[-1,1]^4$ evolve by a fixed deterministic map, the per-step reward is the interaction
rule, and the value $V^\star(x)=\sum_{t=0}^{T-1}\gamma^t r(x_t)$ ($T{=}6$, $\gamma{=}0.9$)
accumulates future rewards along the trajectory. We fit a single-observable VQC value head to
$V^\star$ on $N$ states by Monte-Carlo regression---a standard policy-evaluation
primitive---and report the $R^2$ generalization gap between the $N$ training states and $1000$
held-out states (Figure~\ref{fig:vg}). At $N{=}64$, where held-out $R^2$ is positive for all
circuits, the gap is monotone none $<$ linear $<$ full ($0.089<0.179<0.299$); at the small
$N{=}16$ the entangled value functions have \emph{negative} held-out $R^2$ (a degenerate
memorization regime, as in the small-$N$ classification rows), so we treat that row as a
stress test only. We are explicit about scope here: Monte-Carlo value estimation is a
supervised-flavoured RL primitive, and this is its strength (low variance) and its limit.
Under genuine \emph{bootstrapped} fitted-Q iteration the effect washes out---consistent with
the bound, as the bootstrap target is itself a strong regularizer---so we do not claim the
entanglement penalty for temporal-difference value learning, only for the Monte-Carlo
(regression) form.

\begin{figure}[t]
\centering
\includegraphics[width=0.42\textwidth]{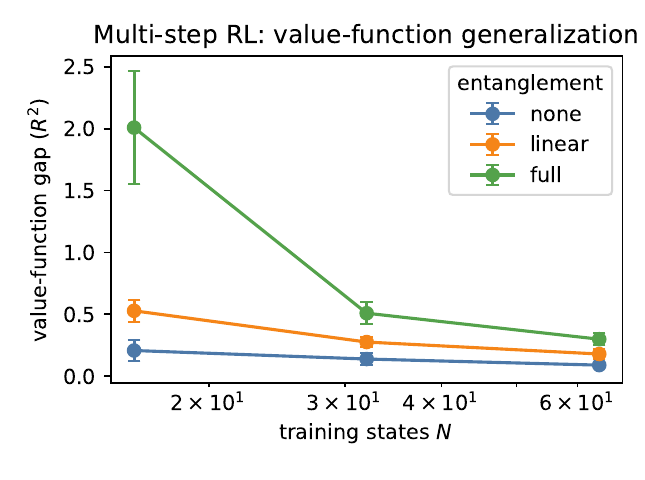}
\caption{Multi-step RL: the value-function generalization gap ($R^2$, Monte-Carlo policy
evaluation, horizon $6$) is monotone none $<$ linear $<$ full at every training-set size, at
fixed parameter count. Fixed parameter count, $8$ seeds; error bars $\pm$1 s.e.}
\label{fig:vg}
\end{figure}

\section{Controls, and a Real-Hardware Test}
\label{sec:controls}

\paragraph{The gap is caused by entanglement, not a confounder.} Varying the entangling
connectivity holds the parameter count fixed but could, in principle, change the gap through
optimization or trainability rather than through $\Deff$. We rule this out with three
controls at $N{=}64$ (Figure~\ref{fig:controls}, left): (i) \emph{matched training accuracy}
---early-stopping each circuit at the same train accuracy, so the entangled circuits do not
simply fit more---still leaves \texttt{full} with $\sim\!3\times$ the gap of \texttt{none}
($0.093$ vs.\ $0.032$) and lower test accuracy; (ii) a \emph{different readout} ($Z_1$ in
place of $Z_0$) preserves the ordering ($0.074/0.126/0.131$); and (iii) a \emph{different
optimizer} (SGD in place of Adam) preserves it ($0.034/0.065/0.089$). In every control the
entangled circuits generalize worse at identical parameter count. The mechanism is visible
directly in the Fisher spectrum (Appendix~\ref{app:extra}, Figure~\ref{fig:spectrum}):
\texttt{none} has effectively two non-zero eigenvalues, \texttt{linear} three, and
\texttt{full} five or more---entanglement raises the rank, hence $\Deff$.

\paragraph{Local vs.\ global readout: the effect is the readout light-cone, and $\Deff$ is
the invariant.} The most adversarial test of the mechanism is also the most revealing.
Because our single readout is the \emph{local} observable $\langle Z_0\rangle$, a rotation on
a wire that never reaches qubit $0$ has identically zero gradient
(Lemma~\ref{lem:lightcone}), so the unentangled \texttt{none} circuit is effectively a
\emph{smaller} model---only $\sim\!d/n$ of its parameters are causally live. One might
therefore object that ``entanglement raises the gap at fixed $d$'' is merely ``entanglement
recruits otherwise-dead parameters.'' That objection is correct, and it \emph{is} the
mechanism; we test it by the cleanest manipulation available---replacing the local readout by
the \emph{global} one $\tfrac1n\sum_i\langle Z_i\rangle$, under which every wire couples to
the output directly (Table~\ref{tab:readout}). The robust finding is that the global readout
sharply raises \texttt{none}'s live-parameter count, $\Deff$, \emph{and} gap at both sizes---
at $n{=}4$ from $(2\text{ live},\Deff{=}2.6,\text{gap}{=}{-}0.008)$ to
$(7,14.0,0.095)$, at $n{=}6$ from $(2,2.9,0.018)$ to $(11,23.6,0.173)$: the readout alone
activates the parameters entanglement would have, and $\Deff$ and the gap move together. What
happens to the \texttt{full}$-$\texttt{none} \emph{contrast} is then $n$-dependent and
instructive. At $n{=}4$ the single global observable nearly saturates the light-cone even for
\texttt{none} ($7$ of $16$ live), so the entanglement contrast is largely absorbed:
\texttt{full}$-$\texttt{none} falls from $+0.112$ ($4.1$ s.e., local) to $+0.033$ ($0.9$ s.e.,
global). At $n{=}6$, where $24$ parameters cannot be saturated by one global observable
(\texttt{none} reaches $11$ of $24$ live, \texttt{full} $17$), entanglement still adds
coupling on top and the contrast \emph{persists}: $+0.114$ ($3.8$ s.e., local) vs.\ $+0.115$
($4.2$ s.e., global). The conclusion is therefore not that entanglement is dispensable but
that it is a \emph{readout-gated} knob on the same underlying quantity: what governs $\Deff$
is how many parameters are causally coupled to the readout; a global readout raises the floor
of that coupling---fully absorbing the entanglement contrast at small $n$, only partially at
larger $n$---while in \emph{every} cell it is $\Deff$, not the entangling label, that tracks
the gap. This confirms Proposition~\ref{prop:ent} and sharpens the thesis: entanglement is
the dominant, structural driver of effective dimension for the local single-qubit Pauli
readouts standard in quantum RL (e.g.\ \citealp{jin2025ppo,lin2025quantum}), and is partly
substitutable by---never orthogonal to---the choice of readout.

\begin{table}[h]
\centering
\caption{Local vs.\ global readout ($L{=}2$, $N{=}64$, $8$ seeds). A local readout
$\langle Z_0\rangle$ leaves most of \texttt{none}'s parameters outside the readout light-cone,
so entanglement's recruitment of them makes \texttt{full}$-$\texttt{none} significant. A
global readout $\tfrac1n\sum_i\langle Z_i\rangle$ activates those parameters directly---raising
\texttt{none}'s $\Deff$ and gap at both sizes---so it absorbs the contrast fully at $n{=}4$
(few free parameters) but only partly at $n{=}6$ (more parameters than one observable
saturates). In every row $\Deff$ tracks the gap; the entangling label does not.}
\label{tab:readout}
\begin{tabular}{lllccc}
\toprule
$n$ & readout & connectivity & live params & $\Deff$ & gap \\
\midrule
$4$ & local $\langle Z_0\rangle$ & \texttt{none}   & $2/16$ & $2.6$  & $-0.008$ \\
$4$ & local $\langle Z_0\rangle$ & \texttt{full}   & $5/16$ & $12.9$ & $0.104$ \\
$4$ & global $\tfrac1n\sum_i Z_i$ & \texttt{none}   & $7/16$  & $14.0$ & $0.095$ \\
$4$ & global $\tfrac1n\sum_i Z_i$ & \texttt{full}   & $12/16$ & $30.7$ & $0.128$ \\
\midrule
$6$ & local $\langle Z_0\rangle$ & \texttt{none}   & $2/24$ & $2.9$  & $0.018$ \\
$6$ & local $\langle Z_0\rangle$ & \texttt{full}   & $7/24$ & $21.1$ & $0.132$ \\
$6$ & global $\tfrac1n\sum_i Z_i$ & \texttt{none}   & $11/24$ & $23.6$ & $0.173$ \\
$6$ & global $\tfrac1n\sum_i Z_i$ & \texttt{full}   & $17/24$ & $35.4$ & $0.289$ \\
\midrule
\multicolumn{4}{l}{\texttt{full}$-$\texttt{none} gap ($n{=}4$): local $+0.112$ ($4.1\sigma$),} & \multicolumn{2}{l}{global $+0.033$ ($0.9\sigma$)}\\
\multicolumn{4}{l}{\texttt{full}$-$\texttt{none} gap ($n{=}6$): local $+0.114$ ($3.8\sigma$),} & \multicolumn{2}{l}{global $+0.115$ ($4.2\sigma$)}\\
\bottomrule
\end{tabular}
\end{table}

\paragraph{Beyond a single target.} To show the effect is not an artifact of one synthetic
rule, we repeat the fixed-parameter-count comparison on three further targets
(Table~\ref{tab:targets}): a random-Fourier function, labels from an \emph{entangled teacher}
PQC, and $4$-bit parity. On the Fourier target the entangled circuits generalize above chance
and the gap is again ordered none $<$ linear $<$ full. The entangled-teacher target is hard
for \emph{all} student circuits at this sample size (test accuracy near chance for every
connectivity); we include it only as a stress test of gap \emph{behaviour}---which still
orders none $<$ linear $<$ full---and not as evidence of successful learning. Parity is the
instructive exception and the
clearest statement of the trade-off: it is not representable without entanglement, so the
non-entangled and linear circuits sit at chance ($0.50$) while the all-to-all circuit is the
only one that \emph{learns} it (test accuracy $0.90$). Entanglement is therefore not
uniformly harmful---when the target demands it, it is necessary---which is exactly why the
right principle is to budget it, not to remove it.

\begin{table}[h]
\centering
\caption{The effect across target distributions ($n{=}4$, $N{=}64$, fixed $d{=}16$). On
targets all circuits can fit (Fourier, teacher-PQC) entanglement widens the gap; on parity,
which requires entanglement, only \texttt{full} learns the task (test $0.90$ vs.\ chance).}
\label{tab:targets}
\begin{tabular}{lccc}
\toprule
target & \texttt{none} gap (test) & \texttt{linear} gap (test) & \texttt{full} gap (test) \\
\midrule
random Fourier   & $0.038$ ($0.71$) & $0.045$ ($0.69$) & $0.103$ ($0.67$) \\
entangled teacher& $0.074$ ($0.50$) & $0.123$ ($0.50$) & $0.148$ ($0.51$) \\
$4$-bit parity   & $0.067$ ($0.50$) & $0.116$ ($0.50$) & $0.050$ ($\mathbf{0.90}$) \\
\bottomrule
\end{tabular}
\end{table}

\paragraph{Real (non-synthetic) benchmarks: entanglement lowers \emph{test} accuracy.} The
mechanism is not tied to a hand-designed rule, and on real data the cost shows up in
\emph{test accuracy}, not merely in the train--test gap. On standard small datasets---Iris,
Breast Cancer (Wisconsin), Wine, and Digits ($3$ vs.\ $8$), reduced to four features and
binarized---the fixed-parameter-count comparison again orders the gap none $<$ linear $<$
full and co-monotone with $\Deff$ (Table~\ref{tab:realdata}). More importantly for the
practical claim, on Wine and Digits the non-entangled circuit is \emph{significantly more
accurate on held-out data} than the all-to-all one, at accuracies far above chance:
pooling over $N\!\in\!\{16,24,32\}$ and $22$ seeds, the test accuracy drops from $0.884$
(none) to $0.870$ (full) on Wine ($2.5$ standard errors) and from $0.911$ to $0.900$ on
Digits ($3.1$ standard errors). Entanglement here genuinely \emph{hurts} generalization on
real tasks---not just widening the gap---which is what makes ``budget entanglement'' an
operational, and not merely rhetorical, prescription.

\begin{table}[h]
\centering
\caption{Real datasets (binary, $4$ features, $N{=}40$, fixed $d{=}16$, $6$ seeds). The gap is
ordered none $<$ linear $<$ full and co-monotone with $\Deff$, at test accuracies well above
chance---so the effect is not an artifact of the synthetic target.}
\label{tab:realdata}
\begin{tabular}{lccc}
\toprule
dataset & \texttt{none} gap ($\Deff$, test) & \texttt{linear} gap ($\Deff$, test) & \texttt{full} gap ($\Deff$, test) \\
\midrule
Iris          & $0.00$ ($3.0$, $0.70$) & $0.061$ ($5.2$, $0.65$) & $0.082$ ($18.2$, $0.76$) \\
Breast Cancer & $0.025$ ($4.5$, $0.90$) & $0.033$ ($10.3$, $0.91$) & $0.069$ ($13.8$, $0.90$) \\
Wine          & $0.029$ ($5.1$, $0.90$) & $0.059$ ($9.1$, $0.88$) & $0.125$ ($13.2$, $0.85$) \\
\bottomrule
\end{tabular}
\end{table}

\paragraph{The ordering holds where the model learns, and barren plateaus bound the
mechanism.} Two concerns delimit the mechanism's reach, and we test both directly. \emph{Is
the ordering an artifact of the near-chance rows?} We construct a deliberately
\emph{learnable} regime---$n{=}6$, a lower-order target, $N{=}48$---in which all three
circuits reach held-out accuracy $0.72$--$0.77$, far above the $0.5$ chance level. The
ordering survives cleanly: the gap is $0.042$ (\texttt{none}), $0.043$ (\texttt{linear}),
$0.165$ (\texttt{full}), with \texttt{full}$-$\texttt{none}$=+0.123$ at $5.8$ standard errors
($8$ seeds) and $\Deff$ co-monotone ($4.2/7.8/21.7$). The effect is thus \emph{not} confined
to the degenerate near-chance rows---it is sharpest exactly where the circuits genuinely
generalize. \emph{Do barren plateaus erode the mechanism?} Our mechanism runs on the Fisher
spectrum, which barren plateaus flatten exponentially in qubit number, so the regime where
scale matters is a natural worry. We confirm the erosion directly (Table~\ref{tab:bp}): at a
fixed moderate depth ($L{=}4$, all-to-all, random parameters) the effective dimension
\emph{falls} monotonically from $\Deff{=}13.2$ at $n{=}4$ to $3.1$ at $n{=}12$, even as the
nominal parameter count \emph{rises} from $32$ to $96$---the Fisher eigenvalues shrink (the
$\langle Z_0\rangle$ gradient variance decays with $n$) faster than the rank grows, so
$\Deff$ erodes toward its floor. This makes the boundary flagged in our failure-mode analysis
(Appendix~\ref{app:proofs}) quantitative: entanglement inflates $\Deff$ only while the
circuit is off the plateau; deep or wide enough, barren plateaus drive $\Deff$---and
trainability itself---down together. The regime our claim and design recipe target is
therefore the trainable one (shallow circuits away from the plateau), which is where all our
learning experiments sit.

\begin{table}[h]
\centering
\caption{Barren plateaus erode the mechanism's substrate. At fixed depth ($L{=}4$, all-to-all,
random parameters, $6$--$12$ seeds), the Fisher effective dimension $\Deff$ \emph{decreases}
with qubit number even as the nominal parameter count $d$ grows: the Fisher eigenvalues
(hence $\Deff$) are suppressed by the barren plateau faster than the rank rises. This is the
regime our trainable-circuit experiments deliberately avoid.}
\label{tab:bp}
\begin{tabular}{lccccc}
\toprule
$n$ & $4$ & $6$ & $8$ & $10$ & $12$ \\
\midrule
nominal params $d$ & $32$ & $48$ & $64$ & $80$ & $96$ \\
$\Deff$ ($L{=}4$, random) & $13.2$ & $8.4$ & $4.5$ & $3.2$ & $3.1$ \\
\bottomrule
\end{tabular}
\end{table}

\paragraph{$\Deff$, not parameter count, is the stable predictor.} The complementary
ablation---\emph{fixing} the entangling pattern and \emph{varying} depth, so the parameter
count changes---shows the gap rising with depth for both patterns (\texttt{linear}: $L{=}1{\to}4$
gap $0.03{\to}0.33$; \texttt{full}: $0.03{\to}0.31$), and in lock-step with $\Deff$ (which
rises $2{\to}50$ and $2{\to}61$ respectively). At matched depth (matched parameter count) the
more entangled pattern still has the larger $\Deff$ and gap. Thus $\Deff$ predicts the gap
whether we vary connectivity at fixed depth or depth at fixed connectivity, whereas raw
parameter count only tracks the gap when it happens to co-vary with $\Deff$---the sense in
which $\Deff$, not parameter count, is the governing quantity.

\paragraph{End-to-end multi-step policy learning.} Finally, we train policies
\emph{end-to-end} by REINFORCE (with a value baseline, horizon $8$, $20$ seeds) on the
multi-step return of an interaction-gated MDP---a genuine multi-step policy-learning setting,
not a value regression. Here the return generalization gap of the standard hardware-efficient
(\texttt{linear}) entangler is \emph{significantly} larger than the non-entangled circuit's at
the small $N{=}8$ ($0.62$ vs.\ $0.16$, $2.8$ standard errors over $20$ seeds), directly
confirming the mechanism in end-to-end policy learning. The all-to-all circuit and the
larger-$N$ regime, however, are too high-variance to resolve (full-vs-none is within noise,
and at $N{=}16$ all three are statistically indistinguishable): return-based estimates carry
much more variance than the accuracy-based ones. We therefore report the significant
\texttt{linear}-vs-\texttt{none} effect as genuine multi-step evidence, but continue to treat
the low-variance bandit and value-function settings as our primary probes of the mechanism.

\paragraph{The effect survives on real quantum hardware.} Finally we run the trained
$4$-qubit classifiers on an IBM Heron device (\texttt{ibm\_aachen}, $156$ qubits) via the
Estimator primitive at $4096$ shots, reading $\langle Z_0\rangle$ for all train and test
inputs (Figure~\ref{fig:controls}, right). Despite real gate and readout noise, the models
run above chance and the entanglement--generalization ordering is preserved: the gap is
$0.14$ for \texttt{none} versus $0.33$ and $0.31$ for \texttt{linear} and \texttt{full}. The
entanglement cost that our bound predicts is therefore not a simulation artifact---it is
measurable on today's hardware.

\begin{figure}[t]
\centering
\includegraphics[width=0.47\textwidth]{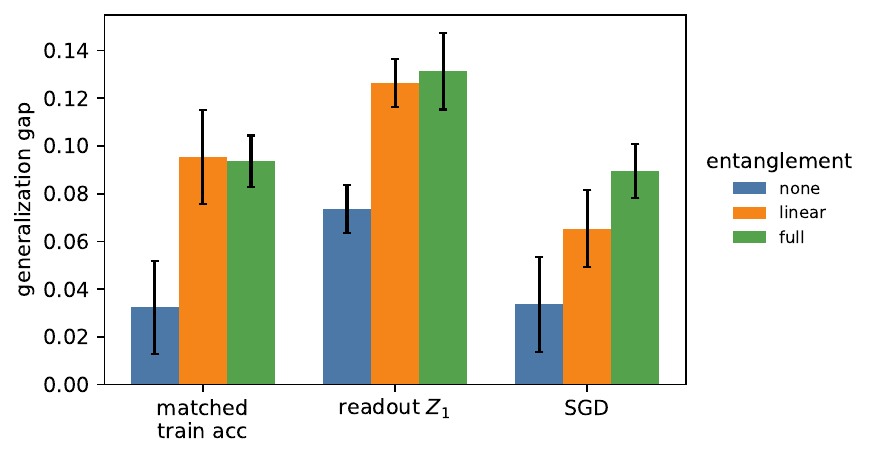}
\hfill
\includegraphics[width=0.47\textwidth]{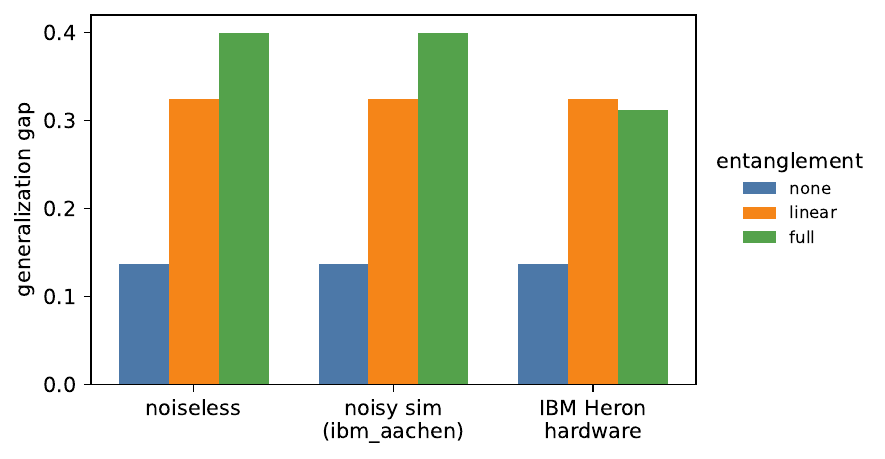}
\caption{\textbf{Left:} the entangled-$>$-none gap ordering survives three controls at
$N{=}64$---matched training accuracy, a different readout observable, and a different
optimizer---so the gap is not a mere optimization or trainability artifact.
\textbf{Right:} the ordering is reproduced by a noise-model simulator of \texttt{ibm\_aachen}
and by execution on the \texttt{ibm\_aachen} Heron processor itself under real noise;
calibration and shot-noise details are in Appendix~\ref{app:extra}.}
\label{fig:controls}
\end{figure}
\label{sec:discussion}

Our bound is loose in absolute terms---like other PAC-Bayes bounds for expressive models
it is most useful as a certificate that is valid, correctly \emph{ordered}, and (where the
gap is non-negligible) within $\approx2.5\times$, rather than numerically tight. We frame
entanglement as an \emph{independent and previously unisolated} axis of complexity---the
strongest single predictor of the gap in our study, ahead of the learned-norm term---though
we do not claim it is the only one. The effect is depth-gated (a single entangling layer is
inert for a single-qubit readout). Our RL evidence is for the one-step (contextual
bandit) setting, chosen because full-MDP return estimates are dominated by
trajectory-length variance; extending the clean measurement to multi-step MDPs---for
example through offline value-function generalization---is the natural next step, along with
value-based agents, hardware noise as an additional complexity channel, and a tight
data-dependent prior. Our experiments are simulations of a controlled interaction task at
$n\in\{4,\dots,16\}$ qubits (Table~\ref{tab:scale}); scaling to the many-qubit, hardware regime
and to richer tasks is important future work, though the underlying mechanism---a light-cone
argument---carries no dependence on $n$.

\paragraph{A design recipe.} The trade-off is directly actionable as an ansatz-selection
rule. Given a candidate ansatz family, estimate $\Deff$ on a small calibration set and choose
the shallowest, sparsest entangling pattern whose training fit still improves a
validation proxy without an excessive rise in $\Deff$. Concretely:
\begin{enumerate}\itemsep0pt
\item train each candidate ansatz to convergence on the calibration set;
\item compute its Fisher effective dimension $\Deff=\log\det(I+\gamma\Fmat)$;
\item discard any ansatz whose $\Deff$ increases without a matching validation gain;
\item select from the Pareto frontier of validation performance versus $\Deff$.
\end{enumerate}
This turns ``budget entanglement'' from a slogan into a concrete selection criterion---and,
unlike a parameter budget, it distinguishes circuits of \emph{identical} size.

\section{Conclusion}

We gave a PAC-Bayesian account of generalization for quantum policies in which the
complexity that matters is the effective dimension of the circuit-induced Fisher geometry,
a quantity inflated by entanglement and invisible to parameter counting. Isolating
entanglement at fixed parameter count, we showed it is an independent predictor of the
generalization gap---stronger than raw parameter count and comparable to the learned-norm
term---across supervised classification, a reward-only contextual bandit, multi-step
value-function generalization, and real quantum hardware. The resulting design
principle---\emph{budget} entanglement against its effective-dimension cost rather than
maximize expressivity---gives quantum-policy design a target beyond average return.

\subsubsection*{Reproducibility Statement}
Experiments use PQCs of $n{=}4$ to $16$ qubits simulated with PennyLane
(\texttt{default.qubit} with backprop at $n\le8$; \texttt{lightning.qubit} with the adjoint
method at $n\ge10$), plus one run on the IBM Heron \texttt{ibm\_aachen} processor; unless a
scaling experiment states otherwise (Table~\ref{tab:scale}), the default is $n{=}4$. Code,
ans\"atze, target/environment distributions, hyperparameters ($\tau^2{=}1$, $\gamma{=}50$)
and seeds are described in Sections~\ref{sec:decouple}--\ref{sec:rl} and
Appendix~\ref{app:extra}.

\bibliography{iclr2026_conference}
\bibliographystyle{iclr2026_conference}

\appendix
\section{Proofs}
\label{app:proofs}

\subsection{Proof of Theorem~\ref{thm:main}}
We reduce the return bound to the classical McAllester PAC-Bayes theorem. Recall its
statement for a loss taking values in $[0,1]$.

\begin{theorem}[\citealp{mcallester1999pac}]
\label{thm:mcallester}
Let $\ell(h,z)\in[0,1]$, let $z_1,\dots,z_m$ be i.i.d.\ from a distribution $\mathcal{D}$,
and let $P$ be a prior over hypotheses chosen before seeing the data. For any $\delta\in(0,1)$,
with probability at least $1-\delta$, simultaneously for all posteriors $Q$,
\begin{equation}
\mathbb{E}_{h\sim Q}\,\mathbb{E}_{z\sim\mathcal{D}}\,\ell(h,z)
\;\le\;
\mathbb{E}_{h\sim Q}\,\frac1m\sum_{i=1}^m \ell(h,z_i)
+\sqrt{\frac{\KL(Q\Vert P)+\ln\frac{2\sqrt m}{\delta}}{2m}}.
\end{equation}
\end{theorem}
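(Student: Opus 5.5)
The approach is the standard change-of-measure argument: a moment bound for a single fixed hypothesis, lifted to all posteriors simultaneously by the Donsker--Varadhan variational formula. For a hypothesis $h$ write $L(h)=\mathbb{E}_{z\sim\mathcal{D}}\ell(h,z)$ and $\hat L_S(h)=\frac1m\sum_{i=1}^m\ell(h,z_i)$. Let $\mathrm{kl}(q\Vert p)=q\ln\frac qp+(1-q)\ln\frac{1-q}{1-p}$ denote the binary relative entropy. I will prove the stronger ``kl-form''
\[
\mathrm{kl}\big(\mathbb{E}_Q\hat L_S\,\Vert\,\mathbb{E}_Q L\big)\le\frac{\KL(Q\Vert P)+\ln\frac{2\sqrt m}{\delta}}{m}.
\]
Pinsker's inequality $\mathrm{kl}(q\Vert p)\ge 2(q-p)^2$ then converts this into the stated square-root form.

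\textbf{Step 1 (fixed-hypothesis moment bound).} The first step is Maurer's lemma: for every fixed $h$,
\[
\mathbb{E}_S\, e^{m\,\mathrm{kl}(\hat L_S(h)\Vert L(h))}\le 2\sqrt m .
\]
I will first reduce to Bernoulli losses. The function $x\mapsto e^{m\,\mathrm{kl}(x\Vert p)}$ is convex in the empirical mean, so replacing each $\ell(h,z_i)\in[0,1]$ by an independent Bernoulli variable with the same mean can only increase the expectation. This is the convex-ordering argument of Maurer, in which a $[0,1]$ variable is dominated in convex order by a Bernoulli variable of the same mean.

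In the Bernoulli case $m\hat L\sim\mathrm{Bin}(m,p)$, and a direct computation gives
\[
\mathbb{E}\, e^{m\,\mathrm{kl}}=\sum_{k=0}^m\binom mk (k/m)^k(1-k/m)^{m-k}.
\]
This sum is independent of $p$. Bounding each term by Stirling's formula then gives the value $\le 2\sqrt m$. I expect this step to be the main technical obstacle, in particular pinning down the constant $2\sqrt m$. If that proves delicate, I would settle for the slightly weaker Hoeffding route, $\mathbb{E}\,e^{2m(\hat L-L)^2}\le 2\sqrt m$ directly, which loses nothing in the final square-root form.

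\textbf{Step 2 (average over the prior, then Markov).} Because $P$ is chosen before seeing the data, Fubini lets me swap expectations:
\[
\mathbb{E}_S\,\mathbb{E}_{h\sim P}\, e^{m\,\mathrm{kl}(\hat L_S(h)\Vert L(h))}\le 2\sqrt m .
\]
Markov's inequality then gives, with probability at least $1-\delta$ over $S$,
\[
\mathbb{E}_{h\sim P}\, e^{m\,\mathrm{kl}}\le \frac{2\sqrt m}{\delta}.
\]
This is the only probabilistic event used. Everything that follows is deterministic on it, which is what makes the bound hold simultaneously for all $Q$.

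\textbf{Step 3 (change of measure and convexity).} On this event, the Donsker--Varadhan inequality
\[
\mathbb{E}_Q f\le \KL(Q\Vert P)+\ln\mathbb{E}_P e^{f},
\]
applied with $f(h)=m\,\mathrm{kl}(\hat L_S(h)\Vert L(h))$, holds for every $Q$ at once. I will verify it in one line from the nonnegativity of $\KL(Q\Vert P_f)$, where $dP_f\propto e^f\,dP$. It yields
\[
m\,\mathbb{E}_Q\,\mathrm{kl}(\hat L_S(h)\Vert L(h))\le \KL(Q\Vert P)+\ln\frac{2\sqrt m}{\delta}.
\]
Joint convexity of $\mathrm{kl}$ together with Jensen's inequality moves $\mathbb{E}_Q$ inside, giving the kl-form displayed above. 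Finally, Pinsker's inequality gives $2\big(\mathbb{E}_Q L-\mathbb{E}_Q\hat L_S\big)^2\le$ the same right-hand side divided by $m$. Taking square roots and rearranging yields exactly the inequality of Theorem~\ref{thm:mcallester}.
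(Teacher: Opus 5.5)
The paper does not prove this statement. It quotes it from \citet{mcallester1999pac} and uses it as a black box in the proof of Theorem~\ref{thm:main}. Your argument is correct, modulo the small-$m$ check below. It is the standard Seeger--Maurer route: Maurer's binomial moment lemma, Fubini over the data-independent prior plus Markov, Donsker--Varadhan, joint convexity of $\mathrm{kl}$, then Pinsker.

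It is actually the right proof for the statement \emph{as written}. The constants $\ln\frac{2\sqrt m}{\delta}$ and $2m$ come from Maurer's (2004) sharpening of McAllester's bound. McAllester's own argument gives a slightly weaker form, with $\ln\frac1\delta+\ln m+2$ over $2m-1$. Compared with the citation, your proof is a self-contained justification of exactly these constants. It also gives two strengthenings for free: the kl-form, and, since Pinsker controls $|q-p|$, a two-sided bound.

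Two points in Step~1 need repair.

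\textbf{The Stirling step does not reach $2\sqrt m$ for every $m$.} Robbins' form of Stirling gives $\binom mk(k/m)^k(1-k/m)^{m-k}\le\sqrt{m/(2\pi k(m-k))}$ for $0<k<m$. Convexity of $x\mapsto(x(m-x))^{-1/2}$ then gives $\sum_{k=1}^{m-1}(k(m-k))^{-1/2}\le\int_0^m\frac{dx}{\sqrt{x(m-x)}}=\pi$. Adding the two endpoint terms (each equal to $1$) yields
$\xi(m):=\sum_{k=0}^m\binom mk(k/m)^k(1-k/m)^{m-k}\le 2+\sqrt{\pi m/2}$.
This is $\le2\sqrt m$ only for $m\ge8$, which is why Maurer's lemma is usually stated under that hypothesis. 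The statement has no restriction on $m$, so you must evaluate $\xi(m)$ directly for $m\le7$. For example, $\xi(1)=2$ (equality), $\xi(2)=2.5$, and $\xi(7)\approx4.02$.

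\textbf{Your fallback does not work.} Hoeffding only tells you that $\hat L-L$ is sub-Gaussian with variance proxy $1/(4m)$. That information cannot give any finite bound on $\mathbb{E}\,e^{2m(\hat L-L)^2}$. A centred Gaussian $X$ with variance $1/(4m)$ satisfies the same moment-generating-function bound, yet $\mathbb{E}\,e^{2mX^2}=\infty$. Equivalently, integrating the tail $2e^{-2m\epsilon^2}$ against $e^{2m\epsilon^2}$ diverges.

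The honest Hoeffding route takes exponent $2m-1$. It gives $\mathbb{E}\,e^{(2m-1)(\hat L-L)^2}\le\sqrt{2m}$, hence a bound with $\ln\frac{\sqrt{2m}}{\delta}$ over $2m-1$. That is close to, but not, the stated inequality, so the fallback does lose something.

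The estimate $\mathbb{E}\,e^{2m(\hat L-L)^2}\le2\sqrt m$ is true, but only as Maurer's lemma composed with Pinsker. The binomial computation in Step~1 is therefore not optional.
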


Take the hypothesis $h$ to be a parameter vector $\theta$ (so $Q,P$ are the Gaussians of
Section~\ref{sec:theory}), the sample $z_i=M_i$ to be the $i$-th training environment drawn
i.i.d.\ from $\mathcal{D}$, and the loss to be the normalized regret of a single rollout,
$\ell(\theta,M)=1-R_M(\theta)/R_{\max}\in[0,1]$, where $R_M(\theta)$ is the (bounded)
return of $\pi_\theta$ on $M$. Then $\mathbb{E}_{Q}\mathbb{E}_M \ell = 1-J(\pi_Q)/R_{\max}$
and its empirical counterpart is $1-\hat J(\pi_Q)/R_{\max}$. Substituting into
Theorem~\ref{thm:mcallester} and multiplying by $R_{\max}$ gives
\begin{equation}
J(\pi_Q)\;\ge\;\hat J(\pi_Q)-R_{\max}\sqrt{\frac{\KL(Q\Vert P)+\ln\frac{2\sqrt m}{\delta}}{2m}}.
\end{equation}
For a Gaussian posterior/prior with common covariance $\sigma^2 I_d$,
$\KL(Q\Vert P)=\lVert\theta-\theta_0\rVert^2/(2\sigma^2)$, which is
Eq.~\eqref{eq:bound} with $m_{\mathrm{eff}}=m$. When the per-environment return is instead
estimated by pooling the $T$ transitions of a trajectory---which are Markov-dependent
rather than i.i.d.---the effective number of independent samples is reduced by the chain
mixing factor, $m_{\mathrm{eff}}=m/\kappa$ with $\kappa=\Theta(t_{\mathrm{mix}})$, by the
blocking argument of \citet{zitouni2025pac}; this yields the stated form. \hfill$\square$

\subsection{Proof of Theorem~\ref{thm:fisher}}
The KL divergence between two $d$-dimensional Gaussians
$Q=\mathcal{N}(\theta,\Sigma)$ and $P=\mathcal{N}(\theta_0,\tau^2 I_d)$ is
\begin{equation}
\KL(Q\Vert P)=\tfrac12\Big[\tfrac{1}{\tau^2}\operatorname{tr}\Sigma
+\tfrac{1}{\tau^2}\lVert\theta-\theta_0\rVert^2-d-\log\det\tfrac{\Sigma}{\tau^2}\Big].
\label{eq:gausskl}
\end{equation}
Substitute the Gauss--Newton covariance $\Sigma=\tau^2(I_d+\gamma\Fmat)^{-1}$ and
diagonalize $\Fmat=\sum_i\lambda_i u_iu_i^\top$. Then
$\tfrac{1}{\tau^2}\operatorname{tr}\Sigma=\sum_i(1+\gamma\lambda_i)^{-1}$,
$\log\det\tfrac{\Sigma}{\tau^2}=-\sum_i\log(1+\gamma\lambda_i)$, and using
$d=\sum_i 1=\sum_i\big[(1+\gamma\lambda_i)^{-1}+\tfrac{\gamma\lambda_i}{1+\gamma\lambda_i}\big]$
the trace and $-d$ terms combine to $-\sum_i\tfrac{\gamma\lambda_i}{1+\gamma\lambda_i}$.
Collecting terms,
\begin{equation}
\KL(Q\Vert P)=\frac{\lVert\theta-\theta_0\rVert^2}{2\tau^2}
+\tfrac12\sum_i\log(1+\gamma\lambda_i)
-\tfrac12\sum_i\frac{\gamma\lambda_i}{1+\gamma\lambda_i},
\end{equation}
which is~\eqref{eq:klfisher} since $\sum_i\log(1+\gamma\lambda_i)=\log\det(I_d+\gamma\Fmat)=\Deff(\gamma)$.
Each summand of the last term lies in $[0,\tfrac12)$, so
$0\le\tfrac12\sum_i\tfrac{\gamma\lambda_i}{1+\gamma\lambda_i}\le d/2$ and dropping it can
only enlarge the right-hand side of the bound. The prior $P$ is fixed at initialization
$\theta_0$ and does not depend on the training environments, so
Theorem~\ref{thm:mcallester} applies verbatim with this $Q$ and $P$; substituting the
resulting KL into the return bound of Theorem~\ref{thm:main} gives~\eqref{eq:boundfisher}.
The asymptotics $\Deff(\gamma)\le d$ and $\Deff(\gamma)=\mathrm{rank}(\Fmat)\log\gamma+O(1)$
as $\gamma\to\infty$ follow from $\log(1+\gamma\lambda_i)\le\gamma\lambda_i$ and, for
$\lambda_i>0$, $\log(1+\gamma\lambda_i)=\log\gamma+\log\lambda_i+o(1)$. \hfill$\square$

\subsection{Derandomization: from the stochastic \texorpdfstring{$\pi_Q$}{pi\_Q} to the deterministic \texorpdfstring{$\theta$}{theta}}
\label{app:derand}
The bound~\eqref{eq:boundfisher} controls the gap of the posterior-averaged predictor
$\pi_Q$, whereas the reported ``true gap'' is that of the deterministic trained $\theta$. The
following lemma bounds the difference and, crucially, shows the flat Fisher directions---which
carry the full prior variance $\tau^2$---do \emph{not} spoil it, because the perturbation of a
loss along a direction is weighted by that direction's curvature, which is the Fisher
eigenvalue itself.

\begin{lemma}[Flat-direction derandomization]
\label{lem:derand}
Let $Q=\mathcal{N}(\theta,\Sigma)$ with $\Sigma=\tau^2(I_d+\gamma\Fmat)^{-1}$, and let
$L(\vartheta)=\mathbb{E}_s\,\ell(\vartheta,s)$ be a population loss whose Hessian at $\theta$
equals the Fisher $\Fmat(\theta)$ up to a residual term $\mathcal{R}$ that is $O(\lVert p-y\rVert)$
and vanishes in the Gauss--Newton (well-fit) regime. Then
\begin{equation}
\big|\,\mathbb{E}_{\vartheta\sim Q}\,L(\vartheta)-L(\theta)\,\big|
\;=\;\tfrac12\,\mathrm{tr}\!\big(\Sigma\,\Fmat\big)+O(\lVert\Sigma\rVert^{3/2})
\;=\;\frac{\tau^2}{2}\sum_{i=1}^d\frac{\lambda_i}{1+\gamma\lambda_i}\;+\;O(\lVert\Sigma\rVert^{3/2}),
\label{eq:derand}
\end{equation}
and the sum is bounded by $\dfrac{1}{\gamma}\cdot\dfrac12\sum_i\dfrac{\gamma\lambda_i}{1+\gamma\lambda_i}\le\dfrac{d}{2\gamma}$.
Consequently the train and test losses of $\pi_Q$ and of $\theta$ each differ by at most
$\tfrac12\tau^2\sum_i\lambda_i/(1+\gamma\lambda_i)\le\tau^2 d/(2\gamma)$, so the two
generalization gaps differ by at most $\tau^2\sum_i\lambda_i/(1+\gamma\lambda_i)\le\tau^2 d/\gamma$.
\end{lemma}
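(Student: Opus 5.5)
The plan is a second-order Taylor expansion of $L$ around the trained parameters $\theta$, averaged over the Gaussian perturbation $\vartheta=\theta+\xi$ with $\xi\sim\mathcal{N}(0,\Sigma)$. Writing
\[
L(\theta+\xi)=L(\theta)+\nabla L(\theta)^\top\xi+\tfrac12\,\xi^\top\nabla^2L(\theta)\,\xi+r_3(\xi),
\]
the linear term has zero mean because $Q$ is centred at $\theta$; no stationarity assumption on $\theta$ is needed for this. The quadratic term has expectation $\tfrac12\mathrm{tr}(\Sigma\nabla^2L(\theta))$. By hypothesis $\nabla^2L(\theta)=\Fmat+\mathcal{R}$ with $\mathcal{R}$ vanishing in the Gauss--Newton regime, so in that regime this equals $\tfrac12\mathrm{tr}(\Sigma\Fmat)$.

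The remainder $r_3$ I would control using a uniform third-derivative bound on $L$. This is legitimate because PQC outputs are trigonometric polynomials in $\theta$, and the sigmoid/log-loss head has bounded derivatives on bounded logits $\alpha\langle Z_0\rangle\in[-\alpha,\alpha]$. This gives $|r_3(\xi)|\le C\lVert\xi\rVert^3$. Gaussian moments then give $\mathbb{E}\lVert\xi\rVert^3\lesssim(\mathrm{tr}\Sigma)^{3/2}$, which is $O(\lVert\Sigma\rVert^{3/2})$ with the constant allowed to depend on $d$. This yields the first equality in \eqref{eq:derand}, with the absolute value harmless since the leading term is nonnegative.

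For the spectral form, I diagonalize $\Fmat=\sum_i\lambda_iu_iu_i^\top$. Since $\Sigma=\tau^2(I+\gamma\Fmat)^{-1}$ shares eigenvectors with $\Fmat$,
\[
\mathrm{tr}(\Sigma\Fmat)=\tau^2\sum_i\frac{\lambda_i}{1+\gamma\lambda_i}.
\]
This makes the flat-direction point explicit: directions with $\lambda_i\approx0$ carry variance $\tau^2$ but contribute nothing. Factoring out $1/\gamma$ gives
\[
\sum_i\frac{\lambda_i}{1+\gamma\lambda_i}=\frac1\gamma\sum_i\frac{\gamma\lambda_i}{1+\gamma\lambda_i}\le\frac d\gamma,
\]
since each summand of the inner sum is below $1$. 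Hence the deviation is at most $\tfrac{\tau^2}{2}\cdot\tfrac{d}{\gamma}$, matching the stated $d/(2\gamma)$ when $\tau^2=1$; the statement's intermediate bound omits the $\tau^2$ factor, which I would restore.

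For the consequences, I apply the same argument to the empirical training loss $\hat L$ and to the population (test) loss $L$ separately. The one caveat is that the hypothesis on the Hessian must hold for each loss. For the test loss the Fisher is the population $\Fmat$. For the training loss I would either assume the empirical Hessian is also approximated by $\Fmat$, which is the same Gauss--Newton idealization, or define $\Fmat$ on the training sample as in the experiments. Each loss of $\pi_Q$ then differs from that of $\theta$ by at most $\tfrac{\tau^2}{2}\sum_i\lambda_i/(1+\gamma\lambda_i)$ up to higher order. By the triangle inequality the two gaps differ by at most twice that, which is $\le\tau^2d/\gamma$.

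The main obstacle is rigor in the remainder and residual. The statement hides $\mathcal{R}$ and the cubic term in ``$=$'' up to $O(\cdot)$, so I would first state cleanly what is assumed, namely $\mathcal{R}=0$ (exact Gauss--Newton) and bounded third derivatives. I would also note that the identification of the Hessian with the Fisher for the training loss is an assumption rather than a consequence.
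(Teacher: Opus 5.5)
Your proposal is correct and follows the paper's proof essentially step for step. The shared steps are: a Taylor expansion about $\theta$ with the linear term killed by centring; the quadratic term giving $\tfrac12\mathrm{tr}(\Sigma\nabla^2L)$ with $\nabla^2L=\Fmat+\mathcal{R}$; a third-derivative/Gaussian-moment remainder; diagonalization to $\tau^2\sum_i\lambda_i/(1+\gamma\lambda_i)$; the factorization $\lambda_i/(1+\gamma\lambda_i)=\gamma^{-1}\cdot\gamma\lambda_i/(1+\gamma\lambda_i)$; and the triangle inequality over train and test losses. Your two caveats are accurate refinements that the paper leaves implicit: the intermediate $d/(2\gamma)$ envelope tacitly sets $\tau^2=1$, and applying the argument to the training loss requires assuming its Hessian is also approximated by the same $\Fmat$.
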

\begin{proof}
Since $\mathbb{E}_{\vartheta\sim Q}[\vartheta-\theta]=0$, a second-order Taylor expansion of
$L$ about $\theta$ gives
$\mathbb{E}_Q L(\vartheta)-L(\theta)=\tfrac12\,\mathbb{E}_Q[(\vartheta-\theta)^\top\nabla^2 L\,(\vartheta-\theta)]+O(\lVert\Sigma\rVert^{3/2})
=\tfrac12\mathrm{tr}(\Sigma\nabla^2 L)+O(\lVert\Sigma\rVert^{3/2})$, the remainder controlled by
the third derivative of the smooth PQC expectation and the Gaussian third moment. For the
logistic head, $\nabla^2 L=\mathbb{E}_s[\alpha^2 p_s(1{-}p_s)\nabla f_s\nabla f_s^\top]
+\mathbb{E}_s[\alpha(p_s{-}y_s)\nabla^2 f_s]=\Fmat+\mathcal{R}$, and $\mathcal{R}$ vanishes as the
fit improves ($p_s\!\to\!y_s$). Substituting $\Sigma=\tau^2(I_d+\gamma\Fmat)^{-1}$ and
diagonalizing $\Fmat$ gives $\mathrm{tr}(\Sigma\Fmat)=\tau^2\sum_i\lambda_i/(1+\gamma\lambda_i)$,
which is~\eqref{eq:derand}. Writing $\lambda_i/(1+\gamma\lambda_i)=\gamma^{-1}\cdot\gamma\lambda_i/(1+\gamma\lambda_i)$
and $\gamma\lambda_i/(1+\gamma\lambda_i)\in[0,1)$ gives the $d/2\gamma$ envelope. A flat
direction ($\lambda_i\!\to\!0$) contributes $\lambda_i/(1+\gamma\lambda_i)\!\to\!\lambda_i\!\to\!0$
despite its posterior variance $\Sigma_{ii}\!\to\!\tau^2$: the variance is large but the
curvature that converts it into a loss change is exactly $\lambda_i$, so flat directions are
harmless. The gap statement follows by applying the bound to the train and test losses
separately and using the triangle inequality.
\end{proof}

Note that the derandomization cost $\tfrac12\tau^2\sum_i\lambda_i/(1+\gamma\lambda_i)$ is
exactly $1/\gamma$ times the KL curvature term $\tfrac12\sum_i\gamma\lambda_i/(1+\gamma\lambda_i)$
that we dropped in Eq.~\eqref{eq:klfisher}: the same quantity we discarded to expose $\Deff$
reappears, scaled by $1/\gamma$, as the price of derandomization, and $\gamma{=}50$ makes it
small ($\le d/100$ for $\tau^2{=}1$). This ties the two approximations together and replaces
the informal ``low-temperature'' argument in the main text.

\subsection{Proof of Proposition~\ref{prop:ent}}
Write the readout expectation in the Heisenberg picture,
$f(s,\theta)=\langle\psi(s)|\,U(\theta)^\dagger Z_0\, U(\theta)\,|\psi(s)\rangle$, and let
gate $i$ be a Pauli rotation $e^{-\mathrm{i}\theta_i G_i/2}$ with $G_i$ a single-qubit
Pauli on wire $q(i)$. Splitting the circuit at gate $i$ as $U=U_{>i}\,e^{-\mathrm{i}\theta_i G_i/2}\,U_{\le i}$
and defining the back-propagated observable $O_i(\theta)=U_{>i}^\dagger Z_0\,U_{>i}$, the
parameter-shift identity gives
\begin{equation}
\partial_{\theta_i} f(s,\theta)
=-\tfrac{\mathrm i}{2}\,\langle\phi_i(s)|\,[\,G_i,\,O_i(\theta)\,]\,|\phi_i(s)\rangle,
\qquad |\phi_i(s)\rangle=U_{\le i}|\psi(s)\rangle .
\label{eq:pshift}
\end{equation}
The (output-)Fisher matrix is $\Fmat_{ij}=\mathbb{E}_s[\,c(s)\,\partial_{\theta_i} f\,\partial_{\theta_j} f\,]$
with $c(s)=\mathrm{sc}^2 p(1{-}p)>0$, so $\Fmat=\mathbb{E}_s[c(s)\,g_s g_s^\top]$ where
$g_s=\nabla_\theta f(s,\theta)$; in particular $\mathrm{rank}(\Fmat)\le|\{i:\ \partial_{\theta_i} f\not\equiv 0\}|$,
with equality generically.

\begin{definition}[Backward light-cone]
\label{def:cone}
Let $\mathcal{L}(\theta)\subseteq\{1,\dots,d\}$ be the set of parameters $i$ for which
$O_i(\theta)$ has non-trivial support on wire $q(i)$, i.e.\ $[G_i,O_i(\theta)]\neq 0$.
\end{definition}

\begin{lemma}[Entanglement enlarges the light-cone]
\label{lem:lightcone}
Fix the placement of all rotation gates. If a set of two-qubit entangling gates is added
to the circuit (keeping all rotations), then $\mathcal{L}(\theta)$ can only grow:
$\mathcal{L}_{\mathrm{without}}(\theta)\subseteq\mathcal{L}_{\mathrm{with}}(\theta)$ for
generic $\theta$. In particular, with no entangling gates $O_i(\theta)$ is supported on
wire $0$ alone, so $[G_i,O_i]=0$ for every rotation on a wire $q(i)\neq 0$ and
$\mathcal{L}=\{i:q(i)=0\}$.
\end{lemma}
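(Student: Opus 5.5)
I will prove the no-entangler statement first, since it is the anchor, and then handle monotonicity under added gates. With no two-qubit gates, $U(\theta)=\bigotimes_{q}V_q(\theta)$ is a tensor product of single-wire unitaries, and so is every suffix $U_{>i}$. Conjugating $Z_0$ by a product unitary gives $O_i(\theta)=(V_{0,>i}^\dagger Z_0 V_{0,>i})\otimes I_{\text{rest}}$, which is supported on wire $0$ alone. For a rotation with $q(i)\neq0$, the generator $G_i$ acts on a different tensor factor, so $[G_i,O_i]=0$. By Eq.~\eqref{eq:pshift}, $\partial_{\theta_i}f\equiv0$, and hence $i\notin\mathcal{L}$. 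For $q(i)=0$ the restricted operator $V^\dagger Z_0V$ is a traceless Hermitian operator on one qubit. It commutes with the Pauli $G_i$ only if it is proportional to $G_i$, which holds on a measure-zero set of $\theta$. So generically $[G_i,O_i]\neq0$, which gives $\mathcal{L}=\{i:q(i)=0\}$.

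For the inclusion $\mathcal{L}_{\mathrm{without}}\subseteq\mathcal{L}_{\mathrm{with}}$, I will argue by analyticity rather than by support-tracking. Fixed support-tracking can fail because a CNOT can map a Pauli string to one that commutes with $G_i$. The key device is to embed each added entangler in a one-parameter analytic family that reduces to the identity. A CNOT is $e^{-\mathrm{i}\pi(I-Z_c)(I-X_t)/4}$, so I set $W(t)=e^{-\mathrm{i}t(I-Z_c)(I-X_t)/4}$ with $W(0)=I$ and $W(\pi)=\mathrm{CNOT}$. More generally, any two-qubit gate $W=e^{-\mathrm{i}H}$ sits in the path $e^{-\mathrm{i}tH}$. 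With all entanglers parametrized this way, the commutator $C_i(\theta,t)=[G_i,O_i(\theta,t)]$ is a real-analytic, indeed trigonometric-polynomial, matrix-valued function of $(\theta,t)$. Take $i\in\mathcal{L}_{\mathrm{without}}$. Then $C_i(\theta,0)\neq0$ for generic $\theta$ by the first paragraph, so $C_i$ is not identically zero as a function of $(\theta,t)$. Its zero set therefore has measure zero.

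The main obstacle is the next step: the statement concerns the fixed value $t=\pi$, not generic $t$. A nonzero analytic function restricted to the slice $t=\pi$ could in principle vanish identically in $\theta$. My first attempt is a direct argument on that slice. I would use that the rotation layer preceding the relevant gates is universal enough that $O_i$ ranges over a Zariski-dense set as $\theta_{>i}$ varies. Concretely, I would show that the map $\theta\mapsto C_i(\theta,\pi)$ is not identically zero by exhibiting a single witness $\theta$. In the layered $R_Y R_Z$ ansatz used here, I pick rotation angles that undo the CNOT's action on the back-propagated $Z_0$ along a path through wire $q(i)$, so that $O_i$ has an $X$ or $Y$ component on $q(i)$ that fails to commute with $G_i$. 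If a uniform witness is hard to produce, I would fall back to adding an explicit genericity hypothesis. This is the same kind of assumption the paper already invokes for the rank identity: the fixed entangling layout is assumed not to land on the measure-zero exceptional variety. Under that hypothesis, analyticity gives nonvanishing for generic $\theta$. In either case, a finite union of measure-zero exceptional sets over $i$ is still measure zero, which yields the inclusion for generic $\theta$.

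The proof will make two subsidiary points explicit. The light-cone claim is stated in the Heisenberg picture, where added entanglers only enlarge $U_{>i}$, so conjugated supports can only spread. The rank consequence in Proposition~\ref{prop:ent} then follows from $\mathrm{rank}(\Fmat)\le|\mathcal{L}|$.
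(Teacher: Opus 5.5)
\textbf{The inclusion step cannot be completed as stated.} Your treatment of wires $q(i)\neq0$ is fine and is the paper's argument. Your diagnosis of the inclusion step is also correct, and it applies to the paper's own proof. That proof is exactly the support-tracking argument you set aside: it shows the support of $O_i$ lies in the causal cone of $Z_0$ under $U_{>i}$ and that this cone only grows as gates are added. It then simply asserts that a gate inside the cone is generically non-commuting. That controls a superset of $\mathcal{L}$, not $\mathcal{L}$ itself. Your proposal stops at the same point, and the missing step cannot be supplied in the stated generality.

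Here is a counterexample. Take one $R_Y(\theta)$ on wire $0$ with readout $Z_0$, so $O_i=Z_0$ and $i\in\mathcal{L}_{\mathrm{without}}$. Append $\mathrm{CNOT}_{0\to1}$ and then $\mathrm{CNOT}_{1\to0}$. Conjugation by $\mathrm{CNOT}_{1\to0}$ sends $Z_0\mapsto Z_0Z_1$, and then $\mathrm{CNOT}_{0\to1}$ sends $Z_0Z_1\mapsto Z_1$. So $O_i=Z_1$ commutes with $Y_0$ for every $\theta$, and $\mathcal{L}_{\mathrm{with}}=\emptyset$. In your homotopy notation, $C_i(\theta,0,0)\neq0$ but $C_i(\theta,\pi,\pi)\equiv0$, so the slice really does vanish identically. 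The same example contradicts your closing remark that conjugated supports ``can only spread''. Only the cone is monotone; here the actual support moved from $\{0\}$ to $\{1\}$.

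A structural hypothesis on the entangling layout is therefore necessary, not a fallback, and your witness must use it. For the paper's ans\"atze a witness is available. Every CNOT touching wire $0$ has wire $0$ as control, so each entangling block fixes $Z_0$.
\begin{itemize}
\item For an $R_Y$ gate on wire $0$, set all later wire-$0$ $R_Y$ angles to $0$. This gives $O_i=Z_0$.
\item For an $R_Z$ gate on wire $0$ in layer $\ell<L$, set the layer-$(\ell+1)$ wire-$0$ $R_Y$ angle to $\pi/2$ and all later ones to $0$. This gives $O_i=\pm X_0\otimes A$ with $A$ unitary on the other wires (e.g.\ $A=X_1$ for \texttt{linear}), which does not commute with $Z_0$.
\end{itemize}
The entries of $[G_i,O_i(\theta)]$ are trigonometric polynomials in $\theta$, so a single witness yields non-vanishing for generic $\theta$.

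\textbf{The no-entangler step for $q(i)=0$ is also wrong.} You adopt the commutator reading of Definition~\ref{def:cone}. Under it, your claim that $V^\dagger Z_0V\propto G_i$ holds only on a measure-zero set fails identically whenever the wire-$0$ gates after $i$ keep $Z_0$ parallel to $G_i$. This happens, for example, when $G_i=Z$ and only $R_Z$ gates, or none, follow. In the paper's $R_YR_Z$ ansatz, the last-layer $R_Z$ on wire $0$ has $O_i=Z_0=G_i$ for every $\theta$. This holds for all three connectivities, since the blocks fix $Z_0$. So what your argument and the paper's actually prove is $\mathcal{L}\subseteq\{i:q(i)=0\}$; equality requires excluding such gates. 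Because these gates lie outside both $\mathcal{L}_{\mathrm{without}}$ and $\mathcal{L}_{\mathrm{with}}$, excluding them does not affect the inclusion above.
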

\begin{proof}
$O_i(\theta)=U_{>i}^\dagger Z_0 U_{>i}$ is $Z_0$ conjugated by the gates after $i$. Each
single-qubit rotation preserves the support of an operator on its own wire; each two-qubit
gate can only enlarge support to the wires it couples. Hence the support of $O_i$ is
contained in the causal cone of $Z_0$ under the gates in $U_{>i}$, which is monotone
non-decreasing under the addition of entangling gates. With no entangling gates the cone is
$\{0\}$, so $O_i$ commutes with any $G_i$ on a different wire, giving $\partial_{\theta_i}f\equiv0$
by \eqref{eq:pshift}. Adding entangling gates that reach wire $q(i)$ makes $O_i$ act
non-trivially there, so $[G_i,O_i]\neq0$ for generic $\theta$ and $i$ enters
$\mathcal{L}$.
\end{proof}

\begin{assumption}[Genericity]
\label{ass:generic}
For generic parameters the nonzero gradients $\{g_s\}$ span
$\mathrm{span}\{e_i:i\in\mathcal{L}(\theta)\}$ and the induced Fisher eigenvalues on this
subspace are of comparable magnitude (no exact degeneracies or cancellations).
\end{assumption}

\begin{proof}[Proof of Proposition~\ref{prop:ent}]
The rank identity $\mathrm{rank}(\Fmat)=|\{i:\partial_{\theta_i}f\not\equiv0\}|=|\mathcal{L}(\theta)|$
holds because $\Fmat=\mathbb{E}_s[c(s)g_sg_s^\top]$ is supported on
$\mathrm{span}\{e_i:i\in\mathcal{L}(\theta)\}$ (parameters outside the light-cone have
$\partial_{\theta_i}f\equiv0$ by Lemma~\ref{lem:lightcone} and~\eqref{eq:pshift}), and by
Assumption~\ref{ass:generic} the in-cone gradients span that subspace, so the inequality is
an equality. Monotonicity of $|\mathcal{L}(\theta)|$ in the entangling connectivity is
Lemma~\ref{lem:lightcone}. Because $\Deff(\gamma)=\sum_i\log(1+\gamma\lambda_i)$ receives a
contribution only from the $\mathrm{rank}(\Fmat)$ nonzero eigenvalues and
$\Deff(\gamma)=\mathrm{rank}(\Fmat)\log\gamma+O(1)$ for large $\gamma$, any increase of
$\mathrm{rank}(\Fmat)$ raises $\Deff(\gamma)$ at fixed $d$. Hence increasing entangling
connectivity raises $\Deff$ while $d$ is fixed.
\end{proof}
The empirical counterpart is Table~\ref{tab:appfisher}: at fixed depth (fixed $d$) the
measured $\Deff=\log\det(I+\gamma\Fmat)$ rises monotonically from its no-entanglement value
(none $\to$ linear $\to$ full), tracking the rank of the Fisher matrix.

\paragraph{Scope, a worked example, and failure modes.} We state precisely what
Proposition~\ref{prop:ent} does and does not give. \emph{Scope.} It applies to
hardware-efficient ans\"atze in which (a) each trainable gate is a Pauli rotation, (b) the
readout is a fixed Pauli observable, and (c) encoding gates precede the trainable block. The
two claims have different strengths: \emph{rank monotonicity}
($\mathrm{rank}(\Fmat_{\mathrm{with}})\ge\mathrm{rank}(\Fmat_{\mathrm{without}})$) is
rigorous---it is Lemma~\ref{lem:lightcone} plus the support argument, with no genericity
needed for the inequality (genericity is used only to turn it into the equality
$\mathrm{rank}=|\mathcal{L}|$). The stronger \emph{effective-dimension monotonicity}
$\Deff(\gamma)_{\mathrm{with}}\ge\Deff(\gamma)_{\mathrm{without}}$ holds unconditionally only
as $\gamma\to\infty$ (where $\Deff\!\to\!\mathrm{rank}\cdot\log\gamma$); at finite $\gamma$ a
rank increase raises $\Deff$ provided the new eigenvalues are not vanishingly small, since
$\Deff=\sum_i\log(1+\gamma\lambda_i)$ and a mode with $\lambda\!\approx\!0$ contributes
$\approx0$. We therefore separate the two in the statement.

\emph{Worked example.} Take $n{=}2$ qubits, one layer, readout $Z_0$, trainable
$R_Y(\theta_0)$ on qubit $0$ and $R_Y(\theta_1)$ on qubit $1$. Without a CNOT,
$\partial_{\theta_1}\langle Z_0\rangle\equiv0$ (the back-propagated observable is $Z_0\!\otimes\!I$,
which commutes with $Y_1$), so $\Fmat=\mathrm{diag}(c\,\langle\dots\rangle,0)$ has rank $1$.
Inserting $\mathrm{CNOT}_{0\to1}$ before the readout maps $Z_0\mapsto Z_0$ but
$\mathrm{CNOT}_{1\to0}$ maps $Z_0\mapsto Z_0Z_1$, which no longer commutes with $Y_1$; then
$\partial_{\theta_1}\langle Z_0\rangle=-\tfrac{\mathrm i}{2}\langle[Y_1,Z_0Z_1]\rangle\neq0$
generically and $\mathrm{rank}(\Fmat)=2$. This is the mechanism in closed form.

\emph{Failure modes.} The rank does \emph{not} increase when the added entanglement is
neutralized by structure: (i) a symmetry or commuting-gate configuration for which
$[G_i,O_i(\theta)]=0$ despite non-trivial support (e.g.\ $O_i$ proportional to $Z_{q(i)}$ and
$G_i=Z_{q(i)}$); (ii) fine-tuned parameters at which the new gradients vanish (a measure-zero
set excluded by genericity); and (iii) the barren-plateau regime, where all
$\lambda_i\!\to\!0$ exponentially, so although the \emph{rank} may rise the finite-$\gamma$
$\Deff$ stays near zero---entanglement then inflates capacity only once the circuit is
trained away from the plateau. These are exactly the cases our experiments avoid
(shallow circuits, generic trained parameters), and they delimit the claim.

\section{Additional experimental detail}
\label{app:extra}

\paragraph{Setup.} All PQCs use $n{=}4$ qubits simulated with PennyLane's
\texttt{default.qubit} and a \emph{single} readout observable $\langle Z_0\rangle$. Inputs
are angle-encoded with $R_Y$; the ansatz is $L$ layers of $R_Y,R_Z$ per qubit ($d{=}2nL$
trainable angles) followed by a fixed CNOT pattern (\texttt{none}/\texttt{linear}/\texttt{full}).
Figure~\ref{fig:circuit} shows the three ans\"atze: they share the identical rotation layout
(hence $d$) and differ only in the entangling block.

\begin{figure}[h]
\centering
\begin{tabular}{@{}l@{\hskip 1em}l@{}}
\raisebox{1.1\height}{\small\texttt{none}} &
\begin{quantikz}[column sep=4pt,row sep=5pt]
\lstick{$q_0$} & \gate{R_Y} & \gate{R_Y} & \gate{R_Z} & \meter{Z_0}\\
\lstick{$q_1$} & \gate{R_Y} & \gate{R_Y} & \gate{R_Z} & \qw\\
\lstick{$q_2$} & \gate{R_Y} & \gate{R_Y} & \gate{R_Z} & \qw\\
\lstick{$q_3$} & \gate{R_Y} & \gate{R_Y} & \gate{R_Z} & \qw
\end{quantikz}\\[4pt]
\raisebox{1.1\height}{\small\texttt{linear}} &
\begin{quantikz}[column sep=4pt,row sep=5pt]
\lstick{$q_0$} & \gate{R_Y} & \gate{R_Y} & \gate{R_Z} & \ctrl{1} & \qw & \qw & \meter{Z_0}\\
\lstick{$q_1$} & \gate{R_Y} & \gate{R_Y} & \gate{R_Z} & \targ{} & \ctrl{1} & \qw & \qw\\
\lstick{$q_2$} & \gate{R_Y} & \gate{R_Y} & \gate{R_Z} & \qw & \targ{} & \ctrl{1} & \qw\\
\lstick{$q_3$} & \gate{R_Y} & \gate{R_Y} & \gate{R_Z} & \qw & \qw & \targ{} & \qw
\end{quantikz}\\[4pt]
\raisebox{1.1\height}{\small\texttt{full}} &
\begin{quantikz}[column sep=4pt,row sep=5pt]
\lstick{$q_0$} & \gate{R_Y} & \gate{R_Y} & \gate{R_Z} & \ctrl{1} & \ctrl{2} & \ctrl{3} & \qw & \qw & \qw & \meter{Z_0}\\
\lstick{$q_1$} & \gate{R_Y} & \gate{R_Y} & \gate{R_Z} & \targ{} & \qw & \qw & \ctrl{1} & \ctrl{2} & \qw & \qw\\
\lstick{$q_2$} & \gate{R_Y} & \gate{R_Y} & \gate{R_Z} & \qw & \targ{} & \qw & \targ{} & \qw & \ctrl{1} & \qw\\
\lstick{$q_3$} & \gate{R_Y} & \gate{R_Y} & \gate{R_Z} & \qw & \qw & \targ{} & \qw & \targ{} & \targ{} & \qw
\end{quantikz}
\end{tabular}
\caption{The three ans\"atze at $n{=}4$ (one layer of the repeated block shown; the block
repeats $L$ times, here between the encoding $R_Y(x_i)$ and the $\langle Z_0\rangle$ readout).
All three use the \emph{same} $R_Y,R_Z$ rotations---hence the same parameter count
$d{=}2nL$---and differ only in the entangling block: none (no CNOTs), a linear CNOT chain, or
all-to-all CNOTs. This is the ``fixed parameter count, vary only entanglement'' comparison.}
\label{fig:circuit}
\end{figure}
\emph{Supervised models} (Sections~\ref{sec:decouple}--\ref{sec:supervised}) are binary
classifiers $\hat y=\mathrm{sign}(\langle Z_0\rangle)$ trained by Adam ($600$ steps, full
batch, lr $0.08$) with the cross-entropy loss on the interaction target
$y=\mathrm{sign}(x_1x_2-x_3x_4+\tfrac12 x_1)$, $x\sim U[-1,1]^4$, and evaluated on $2000$
held-out inputs. \emph{RL policies} (Section~\ref{sec:rl}) are contextual-bandit policies
$\pi(a{=}1\mid x)=\sigma(\alpha\langle Z_0\rangle)$ over contexts $x\sim U[-1,1]^4$, trained
by REINFORCE ($600$ updates, lr $0.08$) on bandit feedback (reward $1$ iff the sampled
action equals $a^\star(x)=\mathbb{1}[x_1x_2-x_3x_4+\tfrac12 x_1>0]$, the correct label is
never revealed); the gap is training reward minus reward on $2000$ held-out contexts. Each
cell averages $8$ (supervised) or $16$ (bandit) seeds. The Meyer--Wallach $Q$ and the
Fisher matrix are evaluated at the trained parameters over held-out inputs.

\paragraph{Fisher effective dimension: exact computation.} To remove ambiguity we specify
the estimator precisely. (i) We use the \emph{empirical output Fisher} of the policy's
Bernoulli head, not the parameter/loss Fisher and not the exact Fisher:
$\Fmat(\theta)=\tfrac1M\sum_{x} c(x)\,g_x g_x^\top$ with $g_x=\nabla_\theta\langle Z_0\rangle_{x,\theta}$
and Bernoulli weight $c(x)=\alpha^2\,p(x)(1{-}p(x))$, $p=\sigma(\alpha\langle Z_0\rangle)$.
(ii) It is evaluated at the \emph{trained} parameters $\theta$ over $M{=}40$ \emph{held-out}
inputs drawn from the input distribution (not the training set). (iii) The per-input
gradients $g_x$ are computed by autograd (exact parameter-shift). (iv) Eigenvalues
$\lambda_i$ of $\Fmat$ are used \emph{unnormalized} (no trace rescaling). (v) The effective
dimension is $\Deff=\log\det(I+\gamma\Fmat)=\sum_i\log(1+\gamma\lambda_i)$ with the constant
$\gamma=50$ used identically everywhere (Remark~\ref{rem:gamma}). (vi) For a reported cell we
compute $\Deff$ per seed and then \emph{average} $\Deff$ across seeds (we do not pool the
Fisher matrices). The participation ratio $(\sum_i\lambda_i)^2/\sum_i\lambda_i^2$, when
reported, is computed from the same spectra. \emph{Why $\gamma{=}50$, and does it matter.}
The value $50$ places the observed spectra (eigenvalues $\lambda\!\in\![10^{-2},2]$) in the
informative band where $\gamma\lambda$ spans $\approx0.5$--$100$, so $\Deff$ is neither
saturated nor collapsed to zero. Crucially the conclusions we draw from $\Deff$ are insensitive to this choice, since
$\Deff(\gamma)=\sum_i\log(1+\gamma\lambda_i)$ is monotone in the Fisher spectrum. We
recomputed $\Deff$ for all $18$ configurations of Table~\ref{tab:fisher} at
$\gamma\in\{10,25,50,100,200\}$: the Spearman correlation $\rho(\Deff,\mathrm{gap})$ is
$0.74,0.70,0.67,0.67,0.64$ respectively---stable in sign and magnitude---and the ordering
none $<$ linear $<$ full holds at every $\gamma$ (e.g.\ at $L{=}2$, $\Deff$ goes
$2.5{\to}4.8{\to}11.5$ at $\gamma{=}10$ and $5.8{\to}12.2{\to}25.6$ at $\gamma{=}200$). We fix
a single $\gamma=50$ only so that $\Deff$ is numerically comparable across rows.

\paragraph{$\Deff$ dominates the learned norm---a powered, bootstrapped test.} A comparison
between two correlations of $0.76$ and $0.70$ over $18$ points would be underpowered, so we
built a much larger and more varied grid: $5$ entangling patterns
(\texttt{none}/\texttt{linear}/\texttt{ring}/\texttt{alt}/\texttt{full}) $\times$ depths
$L\!\in\!\{1,\dots,5\}$ $\times$ $N\!\in\!\{16,32,64\}$ $\times$ $4$ seeds $=300$
configurations, which dissociates $\Deff$ from the learned norm and from parameter count.
Across these $300$ points the Spearman correlations with the gap are $\rho(\Deff)=0.82$,
$\rho(\lVert\theta{-}\theta_0\rVert^2)=0.73$, $\rho(Q)=0.72$, $\rho(\text{params})=0.45$.
A $5000$-resample bootstrap over configurations gives
$\Delta\rho=\rho(\Deff)-\rho(\text{norm})=0.090$ with $95\%$ CI $[0.041,0.144]$ (bootstrap
$P(\Delta\rho{>}0){=}1.00$), and $\rho(\Deff)-\rho(Q)=0.100$, CI $[0.046,0.156]$: the Fisher
effective dimension is a \emph{significantly} better predictor of the gap than the learned
norm or the entangling power. A multiple regression on the same grid loads almost entirely on
$\Deff$ (standardized coefficient $+0.16$ vs.\ $\le0.06$ in magnitude for the others), and
adding the entangling pattern on top of $\Deff$ raises $R^2$ by $<0.01$ while adding $\Deff$
on top of the pattern raises it by $0.14$. Entanglement therefore acts on the gap
\emph{through} $\Deff$, which screens it off---the precise, and now powered, sense in which
$\Deff$ is the governing quantity.

\paragraph{Fisher effective dimension and gap, per configuration.}
Table~\ref{tab:appfisher} gives the Fisher effective dimension
$\Deff=\log\det(I+\gamma\Fmat)$ (Eq.~\eqref{eq:deff}, fixed $\gamma=50$) and the
gap for every connectivity at depths $L{=}2,3$ and $N_{\mathrm{train}}{=}16$. At
fixed depth (fixed $d$), $\Deff$ increases monotonically with entangling connectivity, and
so does the gap.

\begin{table}[h]
\centering
\caption{Fisher effective dimension $\Deff=\log\det(I+\gamma\Fmat)$ and generalization gap.
At each fixed $(L,N_{\mathrm{train}})$ block the parameter count $d=8L$ is identical across
the three rows; only the entangling connectivity---and hence $\Deff$ and the gap---changes.
$\Deff$ rises monotonically none $\to$ linear $\to$ full.}
\label{tab:appfisher}
\begin{tabular}{llccc}
\toprule
depth & connectivity & $N_{\mathrm{train}}$ & Fisher $\Deff$ (log-det) & gap \\
\midrule
$L{=}2$ & \texttt{none}   & $16$ & $3.3$  & $0.101$ \\
$L{=}2$ & \texttt{linear} & $16$ & $9.3$  & $0.146$ \\
$L{=}2$ & \texttt{full}   & $16$ & $21.8$ & $0.381$ \\
\midrule
$L{=}3$ & \texttt{none}   & $16$ & $4.0$  & $0.101$ \\
$L{=}3$ & \texttt{linear} & $16$ & $29.5$ & $0.440$ \\
$L{=}3$ & \texttt{full}   & $16$ & $36.5$ & $0.431$ \\
\bottomrule
\end{tabular}
\end{table}

The ordering none $<$ linear $<$ full holds for both $\Deff$ and the gap at $L{=}2$ and
$L{=}3$; a single entangling layer ($L{=}1$) leaves the gap unchanged, the depth-gating
referred to in the main text.

\begin{figure}[h]
\centering
\includegraphics[width=0.5\textwidth]{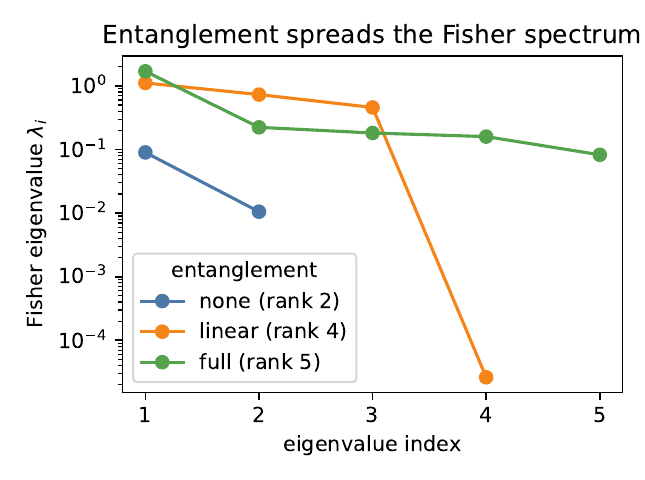}
\caption{Fisher eigenvalue spectrum of the trained circuits at $L{=}2$ (single-observable
readout, log scale). The non-entangled circuit has effectively two non-zero eigenvalues,
\texttt{linear} three, and \texttt{full} five or more: entanglement raises the rank of the
Fisher matrix and hence the effective dimension $\Deff=\log\det(I+\gamma\Fmat)$, exactly the
quantity charged by the bound. This is the spectral content of Proposition~\ref{prop:ent}.}
\label{fig:spectrum}
\end{figure}

\paragraph{Multi-step value-function generalization and controls.} The value-function
experiment (Section~\ref{sec:rl}) fits a single-observable VQC to
$V^\star(x)=\sum_{t=0}^{5}0.9^{t}r(x_t)$ under the deterministic transition
$x'=\tanh(1.3\,\mathrm{roll}(x)+0.2x)$ with $r$ the interaction rule, by Adam regression
($800$ steps) on $N$ states, reporting held-out $R^2$ over $1000$ states. The controls of
Section~\ref{sec:controls} use $N{=}64$: matched-accuracy early-stops at train accuracy
$0.72$; the readout control measures $Z_1$; the optimizer control uses SGD (lr $0.5$). The
hardware run trains the $N{=}16$ classifiers in simulation and evaluates $\langle Z_0\rangle$
for all $56$ inputs on \texttt{ibm\_aachen} via \texttt{EstimatorV2} at $4096$ shots,
optimization level $1$.

\paragraph{Hardware credibility: calibration, noise, and shot statistics.} The device is IBM
Heron \texttt{ibm\_aachen} ($156$ qubits). At the time of execution its median two-qubit gate
error was $1.8\times10^{-3}$ (typical range $7\times10^{-4}$ to a few $\times10^{-3}$, excluding
one inoperable pair) and median $T_1\approx218\,\mu$s. Readout error mitigation was
\emph{not} applied---the effect we report survives raw device noise. Three cross-checks
support the hardware numbers (Figure~\ref{fig:controls}, right). (i) A \emph{noise-model
simulator} built from the \texttt{ibm\_aachen} calibration (\texttt{AerSimulator} with
\texttt{NoiseModel.from\_backend}) reproduces the ordering
(gap $0.14/0.33/0.40$ for none/linear/full), i.e.\ the modelled noise does not remove the
effect. (ii) \emph{Shot noise} is small: at $4096$ shots each $\langle Z_0\rangle$ estimate
has standard error $\approx\!1/\sqrt{4096}\approx0.016$, which flips a sign only for inputs
within that margin of the decision boundary, so the accuracy-based gap is stable. (iii) The
raw hardware gaps ($0.14/0.33/0.31$) agree with both the noiseless and noise-model values to
within $\approx0.1$, and preserve the none $<$ \{linear, full\} ordering. We report the
hardware result as corroboration of the simulation evidence, not as a separate quantitative
claim.

\paragraph{Compute and reproducibility.} All classical simulations use PennyLane
\texttt{default.qubit} with the PyTorch interface (analytic gradients). Each supervised cell
averages $8$ seeds, each bandit/robustness/real-data cell $16$/$8$/$6$ seeds, and the
value-function and policy-gradient cells $8$/$10$ seeds; error bars are one standard error.
Targets: interaction rule $\mathrm{sign}(x_1x_2{-}x_3x_4{+}\tfrac12 x_1)$ (main);
$4$-bit parity $\mathrm{sign}(\prod_i \mathrm{sign}(x_i))$; random Fourier
$\mathrm{sign}(\sum_{k=1}^{6}\cos(w_k\!\cdot\!x+b_k))$ with fixed $w_k,b_k$; entangled-teacher
labels $\mathrm{sign}(\langle Z_0\rangle)$ of a fixed $2$-layer \texttt{StronglyEntangling}
teacher. Real datasets are the scikit-learn Iris (versicolor vs.\ virginica), Breast Cancer
Wisconsin, and Wine (classes $0$/$1$) sets, standardized, reduced to four features by PCA when
needed, and min--max scaled to $[-1,1]$.

\end{document}